\documentclass[12pt]{article}
\usepackage[T1]{fontenc}
\usepackage{microtype}
\usepackage{libertine}
\usepackage[T1]{fontenc}
\usepackage[italic]{mathastext}

\usepackage{geometry}
\DeclareMathAlphabet{\mathpzc}{OT1}{pzc}{m}{it}

\usepackage{amsfonts}
\usepackage{amsmath}
\usepackage{amssymb}
\usepackage{graphicx}
\usepackage[table]{xcolor}
\usepackage[toc,page]{appendix}
\usepackage{multirow}
\usepackage{ulem}
\usepackage{setspace}
\usepackage[dvipsnames]{pstricks}
\usepackage{pst-node}
\usepackage{pst-slpe}
\usepackage[authoryear]{natbib}
\usepackage{dcolumn}
\usepackage{tabularx}
\usepackage{booktabs}
\usepackage{caption}
\usepackage{rotating}
\usepackage{threeparttable}
\usepackage{threeparttablex}
\usepackage{longtable}
\usepackage{pdflscape}
\usepackage[colorlinks=true,linkcolor=blue]{hyperref}
\usepackage[nameinlink,capitalize,noabbrev]{cleveref}
\usepackage{xr-hyper}
\hypersetup{citecolor=black, linkcolor=black, urlcolor=black, filecolor=black,colorlinks=true}

\usepackage{numprint}
\usepackage{eurosym}
\usepackage{textcomp}
\usepackage{stackengine}
\usepackage{subcaption}
\usepackage{etoolbox}
\usepackage{siunitx}
\usepackage{accents}
\usepackage{xr}
\usepackage{makecell}
\usepackage{authblk}
\usepackage{bbm}
\usepackage{xfrac}

\DeclareFontFamily{U}{mathx}{\hyphenchar\font45}
\DeclareFontShape{U}{mathx}{m}{n}{<-> mathx10}{}
\DeclareSymbolFont{mathx}{U}{mathx}{m}{n}
\DeclareMathAccent{\widebar}{0}{mathx}{"73}

\robustify\itshape
\robustify\bfseries
\setstackEOL{\#}
\setstackgap{L}{12pt}
\npdecimalsign{.}
\npthousandsep{,}
\npthousandthpartsep{\,}
\newcolumntype{d}[1]{D.{\mbox{.}}{#1}}

\newcolumntype{d}[1]{D.{\mbox{.}}{#1}}
\newcolumntype{R}{>{\raggedleft\arraybackslash}X} 
\newtheorem{proposition}{Proposition}



\begin{document}
	
	\title{The Role of Risk Sharing in Attenuating Business Cycles Within Currency Unions}
	\author[1]{Alberto Pavia\thanks{alberto.paviasoto@kuleuven.be, Naamsestraat 69, 3000 Leuven, Belgium. }}
	\author[1]{Christian Proebsting \thanks{christian.probsting@kuleuven.be, Naamsestraat 69, 3000 Leuven, Belgium. \\ We would like to thank Chris Boehm, Christophe Blot, Giancarlo Corsetti, Luca Dedola, Yuriy Gorodnichenko, Mathias Hoffmann, Raphaël Huleux, Gernot Müller, Emi Nakamura, Fabrizio Perri, Anna Rogantini Pico, Bent Sørensen, Jón Steinsson, Burak Uras, as well as numerous seminar and conference participants for useful comments. We gratefully acknowledge funding from the Research Foundation – Flanders (FWO), project no. G040923N and fellowship no. 1132725N.}}
	\date{\today \vspace{1cm}}  
	\affil[1]{KU Leuven}
	\maketitle
	
	\large 
	\vspace*{-1cm}
	\begin{abstract}
		The United States is a currency union where multiple risk-sharing mechanisms--- migration, fiscal transfers, income diversification and credit markets---buffer consumption from local income fluctuations. We show that risk sharing not only directly smooths consumption but also indirectly stabilizes income by dampening the local multiplier. Combining causal estimates from regional military buildups with a multi-region quantitative model, we find that current levels of risk-sharing cut state-level consumption volatility by a factor of four. Crucially, the indirect stabilization of income accounts for nearly half of this effect, implying substantially larger benefits from integration than conventional measures suggest.
	\end{abstract}
	
	\small
	\textbf{JEL Codes}: F44, F45, E32, F15, F41, F36.
	
	\textbf{Keywords}: risk sharing, currency unions, consumption volatility.

	\newpage
	\section{Introduction}	
	In the 1980s, Massachusetts found itself at the center of a technological boom. The rise of the mini-computer industry sent local employment and incomes soaring, only for the state to slip into a sharp downturn when personal computers undercut the mini-computer business model. Episodes like this are hardly unique across the United States: The Dakotas rose and fell with the fortunes of shale; and manufacturing centers across the Midwest have cycled through periods of expansion and contraction for decades. These regional swings leave clear marks on employment and income, yet households are partially shielded from their full force. An array of risk-sharing mechanisms—such as migration toward stronger labor markets, federal programs that shift resources when conditions weaken, and income streams extending beyond the immediate region—act as a form of insurance, allowing consumption to remain steadier than local incomes alone would imply.	
	
	In this paper, we argue that this insurance view is incomplete. We show that risk sharing not only smooths consumption relative to income; it also \emph{prevents income from falling as much in the first place}. We call this the indirect effect of risk sharing. This channel is quantitatively large, accounting for half of the total stabilization in consumption volatility attributable to risk-sharing mechanisms in the United States. Accounting for this channel implies that the stabilizing benefits of risk sharing are substantially larger than previously understood.
	
	The logic rests on the local Keynesian multiplier. When a region suffers a negative shock, local businesses see revenues fall, workers face lower earnings, and households cut their spending. This decline in consumption triggers a second round of cutbacks, a feedback loop that deepens the recession. Risk-sharing mechanisms cut through this cycle by supporting households' income and spending. By breaking the feedback loop between falling income and falling consumption, risk sharing does not just insure against the business cycle; it dampens the local multiplier and actively reduces the severity of the downturn.
	
	Our papers quantifies the role of cross-regional risk sharing mechanisms in attenuating local business cycles. Previous work has relied on variance decompositions of unconditional income fluctuations to quantify the amount of cross-border risk sharing \citep[see][ (henceforth: ASY) and the subsequent literature]{asdrubali1996channels}. We show that such methods implicitly treat income volatility as exogenous to the level of risk sharing and cannot capture how local income itself depends on local consumption through the local Keynesian multiplier. Instead, we use an identified local business cycle shock to jointly estimate how both consumption and income respond to such a shock. This setup is informative about both the direct and indirect benefits of risk sharing: More risk sharing directly mutes the response of consumption for a given path of income, but it also reduces the response of income itself, thereby indirectly smoothing consumption. 
	
	Specifically, we exploit regional variation in military buildups to isolate exogenous shifts in external demand across U.S. states \citep{nakamura2014fiscal}. We estimate that an increase in external demand of 1 percent of GDP raises a state’s GDP by 1.31 percent upon impact, while consumption rises by half as much (0.62 percent), a ``pass-through'' of income into consumption fluctuations by 47 percent ($=0.62/1.31$). While informative, these estimates by themselves cannot directly reveal how much consumption volatility would rise in the absence of risk sharing. A naïve back-of-the-envelope calculation (assuming the volatility of income does not depend on the level of risk sharing) would suggest that without risk sharing, consumption would simply track income one-for-one. Based on our estimates, this implies consumption volatility would roughly double.
	
	Our findings show that such a calculation drastically understates the gains from risk sharing. In the absence of risk sharing, not only would consumption move one-for-one with income, but the size of the fluctuations in income would be larger due to the loss of stabilization. To precisely quantify this, we build a quantitative multi-region model that incorporates trade, migration, fiscal transfers, and financial markets. The model is disciplined by our causal estimates. Our counterfactual analysis shows that without risk sharing, the local multiplier would nearly double to 2.52. Combining the direct and indirect effects, consumption volatility would therefore almost quadruple, with the indirect effects accounting for nearly half of the effect. 
	
	Digging deeper, we decompose the overall reduction in consumption volatility into four cross-state risk-sharing channels: migration, income diversification through capital markets, fiscal transfers and borrowing \& lending through credit markets (ASY, \cite{parsley2021risk}). We discipline the strength of each channel in our model by regressing alternative outcome variables on the external demand shock: population changes identify the migration channel, differences between GDP and personal income responses capture income diversification, gaps between personal and disposable income reveal fiscal transfers, and divergences between disposable income and consumption reflect borrowing and lending.
	
	This exercise yields two central findings: First, our empirical estimates challenge the conventional view that almost half of income fluctuations across U.S. states are shared through capital markets.\footnote{ASY estimate that 39 percent of fluctuations in states' GDP are smoothed by capital markets. Subsequent studies even larger numbers \citep[see e.g.][]{parsley2021risk,kohler2023risk}.} Our estimates reveal that the direct benefits of income diversification through capital markets are about three times smaller. We show that this discrepancy arises because unconditional state-level fluctuations in GDP are primarily driven by capital income---which is easier to diversify than labor income, whereas the demand-driven shocks we measure have a more balanced incidence on labor and capital income. This finding also rationalizes why we find a larger pass-through of income fluctuations into consumption fluctuations compared to ASY (47 percent vs. 25 percent). 
	
	Second, our structural model suggests that not all risk-sharing mechanisms contribute equally once their interactions are taken into account. Moving from no risk-sharing to the observed level of risk sharing cuts the consumption multiplier by a factor of four. Fiscal transfers account for more than half of this reduction, whereas income diversification through capital markets and migration contribute about a fifth each. The model highlights economically meaningful substitution effects across the various risk sharing channels. For instance, cutting back the tax and transfer system would create stronger incentives for workers to move in response to a local fall in income, partially compensating for the initial loss in risk sharing. This means that a currency union does not necessarily need all four risk-sharing channels, as already implementing a single risk-sharing channel can go a long way in smoothing out output and consumption fluctuation
	
	Finally, we show that U.S. states differ in how much they benefit from the current level of risk sharing. A central factor is how much a state is integrated in the network of interregional trade. States that trade more extensively with the rest of the country derive smaller additional gains from risk sharing, since trade itself dampens the local multiplier by reducing the dependence of local incomes on local spending.
	
	Our findings have important implications for the design of currency unions, particularly the euro area. They suggest that the benefits of creating integrated risk-sharing mechanisms, such as a common fiscal capacity, are substantially larger than previously thought. Such mechanisms do not merely provide insurance; they fundamentally alter the business cycle dynamics of member states by dampening the effect of local shocks at their source. 
	
	The remainder of the paper is structured as follows. After reviewing the literature, we present a simple analytical model to formalize the direct and indirect effects of risk sharing. Section \ref{sec:empirical_results} details our empirical strategy and results. Section \ref{sec:model} introduces the quantitative multi-region model used to perform the counterfactual analysis. Section \ref{sec:model_solution} outlines calibration and estimation, and Section \ref{sec:counterfactuals} describes our quantitative results. The final section concludes.

	\paragraph{Related Literature}
	This paper provides a structural interpretation of risk sharing, showing that its stabilizing benefits are almost twice as large as previously understood. To do so, we link two distinct literatures: one on empirical risk sharing and another on regional fiscal multipliers.  The seminal accounting framework of the risk-sharing literature is structurally agnostic; it treats income fluctuations as exogenous and cannot quantify the general equilibrium effects of stabilization policies. Conversely, the regional multiplier literature provides the necessary causal identification but has not historically examined how risk sharing endogenously determines the size of the multiplier. We combine these approaches to show that risk sharing does not merely smooth consumption but actively dampens the local multiplier, providing a direct link to the theory of optimum currency areas.
	
	Our work engages the vast empirical literature on risk sharing pioneered by \cite{cochrane1991simple}, \cite{mace1991full}, and \cite{asdrubali1996channels}\footnote{See, among others, \cite{sorensen1998international, becker2006intra, hoffmann2008lack, flood2012international} and \cite{kohler2023risk}.}, but we depart from its methodological reliance on unconditional variance decompositions. Our first contribution is to adopt a causal identification strategy that isolates the response to demand shocks using regional variation in military spending. This allows us to provide the first causal estimates of both the direct consumption-smoothing channel and the indirect output-stabilizing channel of risk sharing.
	
	Our second contribution is structural.  While the international business cycle literature has long employed general equilibrium models to study risk sharing and shock transmission (e.g.,\citealp{backus1992international,corsetti2008international}), prior structural work on currency unions has largely focused on modeling single risk-sharing channels in isolation (e.g., \citealp{house2025quantifying} for labor mobility; \citealp{evers2015fiscal} for fiscal transfers). In contrast, our general equilibrium framework incorporates migration, capital markets, fiscal transfers, and credit markets simultaneously. This approach, disciplined by our causal estimates, allows us to quantify the interdependencies between channels. We find that these mechanisms act as substitutes; consequently, the effectiveness of any single policy (such as expanding fiscal capacity) depends critically on the presence or absence of others (such as integrated credit markets) This structural approach provides quantitative guidance for policy debates on the design of currency unions in the United States and Europe.
	
	Our findings also provide a new perspective on the modern literature on cross-regional fiscal multipliers \citep{nakamura2014fiscal, chodorow2019geographic}. This literature is influential because regional estimates are considered robust to confounding aggregate policy responses, such as monetary policy or federal taxes, but it is still an open question how regional multipliers map into aggregate multipliers. \cite{chodorow2019geographic} concludes that regional multipliers provide a rough lower bound for the aggregate, no-monetary-policy-response multiplier. In a similar vein, \cite{nakamura2014fiscal} argue that their large regional multiplier estimates favor models in which demand shocks can have large effects on output if monetary policy is sufficiently accommodative. Our analysis suggests an alternative interpretation: that the magnitude of these regional multipliers is highly sensitive to the presence of risk-sharing mechanisms. Even neoclassical models that imply small aggregate multipliers are consistent with potentially large regional multipliers, as long as regional risk sharing is limited. 
	
	In a complementary contribution, our work speaks to the recent literature that uses regional estimates to identify micro parameters or other general equilibrium effects \citep{guren2021we, wolf2023missing}. In the spirit of this work, we show that our framework can be used to map the size of the regional spending multiplier to the quantitative importance of our indirect risk-sharing channel, providing a new tool for interpreting regional variation.
	
	Finally, our work relates to the broader literature on optimum currency areas  \citep{friedman1953case,mundell1961theory}. Our contribution is twofold. First, we provide a holistic framework that quantifies several of the most prominent criteria proposed to stabilize consumption fluctuations across U.S. states. Second, we highlight important interdependencies across these criteria, which are often discussed separately in the literature. For example, \citet{mckinnon1963optimum} emphasized trade openness as a prerequisite for an optimal currency area. We show that it acts as a substitute for risk-sharing because it weakens the feedback loop of the Keynesian local multiplier by making local income less dependent on local demand.\footnote{See \cite{farhi2014labor} for pointing out interaction effects between labor mobility and trade openness.} 
	
	\section{Risk Sharing and Open-Economy Multipliers} 
	\label{sec:simple_model}
	Consider a small open economy that faces a temporary and unexpected increase in nominal federal government spending: $d G^{nom}_{t}$. Conceptually, we can decompose the response of nominal consumption into two components: the elasticity of nominal consumption to nominal output, $ \frac{d \ln {C}^{nom}_{t}}{ d \ln {Y}^{nom}_{t}}$, and the change in output, $d Y^{nom}_{t}$. The log change in nominal consumption for an increase in nominal federal government spending corresponding to 1 percent of output (normalized to $Y^{nom} = 1$ such that $d \ln Y^{nom}_t = d Y^{nom}_t$) is given by 
	\begin{align}
		\left.\frac{d\ln C^{nom}_t}{d G^{nom}_t}\right|_{Y^{nom}=1} =  \frac{d \ln C^{nom}_{t}}{d \ln Y^{nom}_{t}} \cdot  \frac{d Y^{nom}_{t}}{d G^{nom}_{t}}. 
	\end{align} 
	Both of these terms have received substantial attention in the literature. The first term, $ \frac{d \ln C^{nom}_{t}}{d \ln Y^{nom}_{t}}$, is a central object to the empirical international risk-sharing literature. ASY estimate this elasticity by regressing log changes in nominal consumption on log changes in nominal output across U.S. states. Following this literature, we denote this elasticity by $\beta^C_t\equiv  \frac{d \ln C^{nom}_{t}}{d \ln Y^{nom}_{t}}$ because it captures the share of risk that passes-through to consumption after accounting for all risk-sharing channels. Under log utility, perfect risk sharing requires that per-capita nominal consumption grows at equal rates in the small open economy and the rest of the world \citep{backus1993consumption}. This implies that the growth rate of the country's nominal consumption is independent of its nominal output growth: $\beta^C_t = 0$. With imperfect risk sharing, $\beta^C_t>0$, and in the extreme case of $\beta^C_t = 1$, income fluctuations translate one-for-one into consumption fluctuations.
	
	The second term, $\frac{d Y^{nom}_t}{d G^{nom}_t}$, is the open-economy relative multiplier. This follows \cite{nakamura2014fiscal}, who estimate this multiplier by exploiting variation in external demand (federal government spending) across U.S. states. We denote this term by $m^Y_t$.

	Combining these definitions, we write the consumption multiplier, $m^C_t$, as
	\begin{align}
		\label{eq:response_C}
		m_t^C \equiv \frac{d \ln C^{nom}_{t}}{d G^{nom}_{t}} =  \beta^C_t \cdot m_t^Y. 
	\end{align}
	The multiplier $m^Y_t$ indicates the percent change in income for a shock to external demand corresponding to 1 percent of output. The elasticity $\beta^C_t$ governs how this change in income translates into a percent change in consumption. 
	
	While the literature typically studies risk sharing and fiscal multipliers in isolation, we show that they are endogenously related. Specifically, the multiplier is an increasing function of the share of unsmoothed income $\beta^C_t$, i.e. $m^Y_t(\beta^C_t)$ with $\frac{\partial m^Y_t}{\partial \beta^C_t}>0$. This interrelationship implies that greater risk sharing (a lower $\beta^C_t$) confers a dual benefit: It directly dampens the response of consumption to a \textit{given} change increase in income, but it also simultaneously reduces the sensitivity of output to the initial increase in external demand (a lower $m^Y_t$). We now formalize this argument within a standard open-economy model.

	\subsection{A Simple Open-Economy Model} 
	
	We consider a continuum of ex-ante identical small open economies that belong to a currency union \citep{gali2005monetary}. Each economy experiences idiosyncratic fluctuations in federal government demand for its goods, while aggregate federal government spending at the currency union level remains constant. Within this framework, we establish a formal link between the open-economy relative multiplier, $m_t^Y$, and the share of unsmoothed income fluctuations, $\beta^C_t$.
	
	This relationship is independent of several model features that are known to influence the closed-economy multiplier, such as the degree of nominal price or wage rigidity, monetary policy, decreasing returns to scale in the production function, input market segmentation, and the elasticity of labor supply. Instead, the multiplier depends critically on the economy's openness in both financial and goods markets. We therefore focus our exposition on these two features and present the remaining model components alongside the quantitative model in section \ref{sec:model}.\footnote{While these other features do influence the multiplier, they do so only through their effect on $\beta^C_t$.}

	\subsubsection{Trade in Financial Assets}
	In this simple model, we focus on a single, generic risk sharing mechanism---an assumption we later relax in the quantitative model by incorporating multiple risk-sharing channels that can be mapped directly to the data. We assume that households can trade state-contingent bonds, but these transactions are subject to a tax. This friction, in the spirit of \cite{devereux2014capital}, allowing us to parameterize the economy anywhere between financial autarky and complete markets.
	
	More formally, let $B_i(s^t,s_{t+1})$ denote the quantity of state-contingent bonds purchased by households in country $i$ after history $s^t$ that pay one unit of the union's currency in state $s^{t+1}$.\footnote{We employ the standard notation for models that explicitly account for state-contingent bonds: In each period $t$, the economy experiences one of finitely many events $s \in S$. The transition probability from state $s$ to $s'$ follows a Markov chain denoted by $\pi(s'|s)$. We use $s^t=(s_0,s_1,...,s_t)$ to represent the history of events through period $t$, with probability $\pi(s^t)$ as of period 0.} The price of these bonds to households in the small open economy is $(1+t_i(s^t,s_{t+1})) Q(s^t,s_{t+1})$, where $Q(s^t,s_{t+1})$ is the net price and $t_i(s^t,s_{t+1})$ is a (potentially negative) tax levied by the federal government. This tax rate depends on both the specific bond to which it applies to and the country. 
	
	Beyond bond income, households receive nominal income from output production, $P_i^Y(s^t) Y_i(s^t)$, where $P_i^Y(s^t)$ is the nominal price of the economy's output, and purchase consumption goods, $C_i(s^t)$, at price $P_i(s^t)$. We remain agnostic about the production process, but assume a fixed capital stock that does not require investment. Finally, households in the economy need to pay a lump-sum federal tax $T(s^t)$ that is common across countries. 
	
	The household's budget constraint is
	\begin{align}
	P_i(s^t) C_i(s^t) + \sum_{s^{t+1}} (1+t_i(s^t,s_{t+1})) Q(s^t,s_{t+1}) B_i(s^t,s_{t+1}) = P_i^Y(s^t) Y_i(s^t) - T(s^t) + B_i(s^{t-1},s_t).
	\end{align} 
	We assume households have log utility over consumption, and consumption and labor are separable. Then, the optimal choice of bond positions yields the following international risk-sharing condition for any country pair $i$ and $j$:
	\begin{align}
		\label{eq:BackusSmith_imperfect}
		\frac{C^{nom}_i(s^{t+1})}{C^{nom}_i(s^{t})}	(1+t_i(s^{t},s_{t+1}))  &= \frac{C^{nom}_j(s^{t+1})}{C^{nom}_j(s^{t})}	(1+t_j(s^{t},s_{t+1})), 
	\end{align}
	where $C_i^{nom}(s^t) \equiv P_i(s^t) C_i(s^t)$ is nominal consumption. Absent the tax on bonds, nominal consumption growth would be equalized across countries in all possible states, implying full risk sharing \citep{backus1993consumption}. Consumption growth would not depend on countries' income growth: Households would fully insure themselves against income fluctuations by buying bonds that pay out in low-income states and selling bonds that pay out in high-income states.
	
	The taxes on bonds break this perfect risk sharing condition. We specify country $i$'s tax rate for bond $B(s^t,s_{t+1})$ as a decreasing function of that countries' income growth:
	\begin{align}
	1 + t_i(s^t,s_{t+1}) = \left( \frac{Y^{nom}_i(s^{t})}{Y^{nom}_i(s^{t+1})} \right)^{1-\lambda},		
	\end{align}
	where $Y^{nom}_i(s^{t}) \equiv P^Y_i(s^{t})Y_i(s^t)$ is nominal output and $\lambda \in [0,1]$ governs the degree of curvature of the tax function. This specification reduces households' incentives to engage in self-insurance: For households in country $i$, the tax raises the price of bonds that pay out in states when country $i$ has relatively low income, making it more expensive for these households to self insure. Conversely, the price of bonds that pay out in high-income states have a low, potentially negative tax rate. Substituting this tax function into the risk-sharing condition (\ref{eq:BackusSmith_imperfect}) yields
	\begin{align}
		\label{eq:risk_sharing_lambda} 
		C^{nom}_{i,t}= \kappa_i \kappa_t \left(Y^{nom}_{i,t}  \right)^{1-\lambda}, 
	\end{align}
	where $\kappa_i$ is a country-specific constant and $\kappa_t$ is a union-wide variable.\footnote{The Appendix spells out $\kappa_i$ and $\kappa_t$.} align (\ref{eq:risk_sharing_lambda}) demonstrates how $\lambda$ determines the degree of risk sharing: If $\lambda=0$, trade is balanced period-by-period, with nominal consumption equaling nominal output ($C^{nom}_{i,t} = Y^{nom}_{i,t}$), representing financial autarky. If $\lambda=1$, the tax is zero and markets are complete, resulting in full risk sharing across countries ($C^{nom}_{i,t} = C^{nom}_t$). More generally,
	\begin{align}
	\beta^C_t \equiv  \frac{d \ln C^{nom}_t}{d \ln Y^{nom}_t} = 1-\lambda,	
	\end{align}
	so that the share of income fluctuations left unsmoothed is constant over time.\footnote{Given the small open economy assumption, changes in nominal output driven by idiosyncratic shocks to country $i$ have no effect on $\kappa_t$.} The parameter $\lambda$ can therefore be interpreted as the degree of international financial integration: higher values of $\lambda$ imply that a larger share of idiosyncratic income shocks is diversified internationally.

	\subsubsection{Trade in Goods} 
	We model international trade in goods following \cite{armington1969theory}, as is standard in the open-economy literature. 
	Consumption in this economy consists of a CES aggregate of domestically produced goods, $C^{dom}_{i,t}$, and imports, $C^{imp}_{i,t}$, with elasticity of substitution $\psi\geq 0$:
	\begin{align}
		C_{i,t} = \left( \left( \omega \right)^{\frac{1}{\psi}} \left( C^{dom}_{i,t} \right)^{\frac{\psi-1}{\psi}} + (1-\omega)^{\frac{1}{\psi}} \left(C^{imp}_{i,t} \right)^{\frac{\psi-1}{\psi}} \right)^{\frac{\psi}{\psi-1}}.  
	\end{align}%
	where $\omega$ represents the home bias parameter. 
	The nominal price of domestic goods is $P^Y_{i,t}$, while the nominal price of imported goods is $P^{imp}_{i,t}$. Optimal demand for domestic and imported goods is:
	\begin{align}
		C^{dom}_{i,t} = \omega C_{i,t} \left( \frac{P^Y_{i,t}}{P_{i,t}} \right)^{-\psi} \qquad \text{and} \qquad 
		C^{imp}_{i,t} = (1-\omega) C_{i,t} \left( \frac{P^{imp}_{i,t}}{P_{i,t}} \right)^{-\psi}.
	\end{align}
	The rest of the world demands the SOE's goods according to a similar demand schedule:
	\begin{align}
		C^{exp}_{i,t} = \bar{C}^{exp,nom}_{i} \left(P^Y_{i,t}  \right)^{-\psi},
	\end{align}
	where $\bar{C}^{exp,nom}_{i}$ is nominal exports in the non-stochastic steady state. Finally, the good produced by the SOE can also be purchased by the federal government. The market clearing condition therefore reads
	\begin{align}
	Y_t = C^{dom}_{i,t} + G_{i,t} + C^{exp}_{i,t}.	
	\end{align}
	By substituting the optimal demand for domestic goods into this market-clearing condition, we derive a relationship between the multiplier and the share of unsmoothed income fluctuations, as formalized in Proposition \ref{prop:Prop1}.
	\begin{proposition}[Risk Sharing and Output Multipliers]\label{prop:Prop1}
		Consider the model discussed in this section. The share of income fluctuations that remain unsmoothed is given by
		\begin{align} 
			\beta^C_{i,t} \equiv \frac{d \ln C^{nom}_{i,t}}{d \ln Y^{nom}_{i,t}} = 1-\lambda,
		\end{align}
		where $\lambda \in [0,1]$ is the index of financial market integration. Assuming a unit Armington elasticity, $\psi=1$, the open-economy relative output multiplier is increasing in $\beta_t^C$:
		\begin{align}
			m_{i,t}^Y \equiv \frac{d Y^{nom}_{i,t}}{d G^{nom}_{i,t}} = \frac{1}{1-\omega \bar{\mathpzc{c}} \beta_{i,t}^C} = \frac{1}{1-\omega \bar{\mathpzc{c}} (1-\lambda)},
		\end{align}
		where $\omega$ is the share of final expenditure that falls on domestic goods, and $\bar{\mathpzc{c}} \equiv \frac{\bar{C}^{nom}}{\bar{Y}^{nom}}$ is the share of nominal consumption in nominal output. 
		
	\end{proposition}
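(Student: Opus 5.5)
The plan is to take the first claim directly from the risk-sharing condition (\ref{eq:risk_sharing_lambda}) and to obtain the second by log-linearizing the goods-market clearing condition in nominal terms.

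\emph{Step 1: the pass-through.} Taking logs of (\ref{eq:risk_sharing_lambda}) gives
$$\ln C^{nom}_{i,t} = \ln\kappa_i + \ln\kappa_t + (1-\lambda)\ln Y^{nom}_{i,t}.$$
Under the small open economy assumption, $\kappa_t$ is unaffected by country-$i$ shocks, and $\kappa_i$ is a constant. Differentiating therefore yields $\beta^C_{i,t}=1-\lambda$ immediately.

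\emph{Step 2: nominal market clearing.} I would multiply the market clearing condition by $P^Y_{i,t}$ to write it in nominal terms:
$$Y^{nom}_{i,t} = P^Y_{i,t}C^{dom}_{i,t} + G^{nom}_{i,t} + P^Y_{i,t}C^{exp}_{i,t}.$$
With $\psi=1$, the domestic demand schedule gives $P^Y_{i,t}C^{dom}_{i,t} = \omega P_{i,t}C_{i,t} = \omega C^{nom}_{i,t}$. The key observation is that expenditure shares are constant under unit elasticity, so relative prices drop out. Likewise, the export schedule gives $P^Y_{i,t}C^{exp}_{i,t} = \bar C^{exp,nom}_i$, a constant. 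The condition then collapses to
$$Y^{nom}_{i,t} = \omega C^{nom}_{i,t} + G^{nom}_{i,t} + \bar C^{exp,nom}_i.$$

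\emph{Step 3: the multiplier.} Totally differentiating around the steady state, normalized so that $\bar Y^{nom}=1$, gives $dY^{nom} = \omega\, dC^{nom} + dG^{nom}$. From Step 1,
$$dC^{nom} = \bar C^{nom}\, d\ln C^{nom} = \bar C^{nom}\beta^C\, d\ln Y^{nom} = \bar{\mathpzc{c}}\,\beta^C\, dY^{nom},$$
where the last equality uses $\bar Y^{nom}=1$. Substituting and solving yields
$$m^Y = \frac{dY^{nom}}{dG^{nom}} = \frac{1}{1-\omega\bar{\mathpzc{c}}\beta^C},$$
and hence also $m^Y = 1/\bigl(1-\omega\bar{\mathpzc{c}}(1-\lambda)\bigr)$.

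\emph{Step 4: monotonicity.} Since $\omega,\bar{\mathpzc{c}}\in(0,1]$ and $\beta^C\in[0,1]$, the denominator is positive whenever $\omega\bar{\mathpzc{c}}<1$. It is decreasing in $\beta^C$, so $m^Y$ is increasing in $\beta^C$.

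The main subtlety is Step 2. One must check that with $\psi=1$ the consumer price index $P_{i,t}$ and the relative price $P^Y_{i,t}/P_{i,t}$ indeed cancel exactly in nominal terms, so that no price-rigidity or supply-side parameters enter. This is also the sense in which the result is independent of the other model features. I would also confirm that $\omega$ is interpreted as the domestic expenditure share, which holds exactly under Cobb–Douglas aggregation.
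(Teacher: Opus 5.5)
Your proposal is correct and follows essentially the paper's proof: you substitute the $\psi=1$ demand schedules into nominal market clearing to get $Y^{nom}_{i,t}=\omega C^{nom}_{i,t}+G^{nom}_{i,t}+\bar C^{exp,nom}_i$, differentiate with respect to $G^{nom}_{i,t}$, and solve for $m^Y_{i,t}$. Your explicit conversion $dC^{nom}=\bar{\mathpzc{c}}\,\beta^C\,dY^{nom}$ (which holds with or without the normalization $\bar Y^{nom}=1$) spells out the step the paper compresses into ``rearranging,'' and your Step 1 reproduces the derivation of $\beta^C=1-\lambda$ that the paper gives just before the proposition.
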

	\textbf{Proof:} Substituting the demand for $C^{dom}_{i,t}$ and for $C^{exp}_{i,t}$ into the market-clearing condition with $\psi=1$ and multiplying by $P^Y_{i,t}$ yields:
\begin{align}
	Y^{nom}_{i,t} = \omega C^{nom}_{i,t} + G^{nom}_{i,t} + \bar{C}^{exp,nom}_{i}.
	\label{eq:Y_nom}
\end{align}

Differentiating with respect to \(G^{nom}_{i,t}\) yields
\begin{align}
	\frac{d Y^{nom}_{i,t}}{d G^{nom}_{i,t}}
	= \omega \frac{\partial C^{nom}_{i,t}}{\partial Y^{nom}_{i,t}}
	\frac{d Y^{nom}_{i,t}}{d G^{nom}_{i,t}} + 1.
	\label{eq:diff}
\end{align}

Rearranging and denoting the open‐economy multiplier by
\(m_{i,t}^Y \equiv \frac{d Y^{nom}_{i,t}}{d G^{nom}_{i,t}}\) yields:
\begin{align}
	m_{i,t}^Y = \left( 1 - \omega \bar{\mathpzc{c}} \beta^C_{i,t} \right)^{-1},
	\label{eq:multiplier}
\end{align}

where
\(\beta^C_{i,t} \equiv \frac{d \ln C^{nom}_{i,t}}{d \ln Y^{nom}_{i,t}}\)
is the elasticity of consumption to output.

\(\blacksquare\)

	Proposition 1 establishes that risk sharing reduces the open-economy relative multiplier. The intuition is as follows: consider a \$1 increase in external demand. Under complete financial markets ($\beta^C=0$), consumption remains unchanged and output rises by exactly \$1, yielding $m^Y|_{\beta^C=0} = 1$. Under financial autarky ($\beta^C=1$), the output increase translates one-for-one into higher consumption (assuming that the share of federal military spending in the SOE's GDP is negligible, $\bar{\mathpzc{c}}  \approx 1$). A share $\omega$ of this consumption increase falls on domestic goods, further stimulating output and consumption through a multiplier effect. The total impact exceeds that under complete markets:
	\begin{align}
	\left.m^Y\right\vert_{\beta^C=1} = 1 + \omega + \omega^2 + ... = \frac{1}{1-\omega} \geq 1 = \left.m^Y\right\vert_{\beta^C=0}.	
	\end{align}
	
	The continuous, blue line in Panel (a) of Figure \ref{fig:simpleModel} illustrates this nonlinear relationship between risk sharing ($\lambda = 1-\beta^C$) and the multiplier ($m^Y$). Greater risk sharing (higher $\lambda$) lowers the multiplier substantially. Quantitatively, starting from a multiplier equal to $m=4.5$ under zero risk sharing ($\lambda=0$), the multiplier falls to $m^Y=2.35$ for $\lambda = 0.25$ and to $m^Y=1.65$ for $\lambda=0.5$. Under perfect risk sharing ($\lambda=1$), the multiplier equals 1.
	
	Home bias ($\omega$) plays a crucial role in this relationship, with higher home bias amplifying the feedback loop. For our analysis, we set $\omega = 0.78$, which corresponds to the average home bias across U.S. states \citep[see below and][]{rodriguez2025trade}.\footnote{We implicitly assume that government spending falls entirely on the good produced by the SOE, as in \cite{nakamura2014fiscal}. Alternatively, we could have assumed that a fraction $1-\omega$ of government spending in the SOE falls on imported goods (as we assume for private consumption). In that case, the expression for the multiplier needs to be multiplied by $\omega$ because only a fraction $\omega$ of government spending actually falls on SOE's goods.} The Armington elasticity is equally important. The assumed unit elasticity simplifies the analysis by ensuring constant expenditure shares on domestic goods. A higher Armington elasticity would amplify expenditure switching and cause the expenditure share on domestic goods to decline, thereby weakening the feedback loop and the relationship between risk sharing and the multiplier. Empirical estimates of the Armington elasticity in the literature range from 5–8 in the long run \citep{broda2006groundnuts} to values potentially below unity at business cycle frequencies \citep{boehm2023long}. For comparison, \cite{nakamura2014fiscal} employ a DSGE model with complete financial markets (which corresponds to the case $\lambda = 1$) and an Armington elasticity of 2, yielding multipliers below unity: They report 0.83 in a model with sticky prices (which mutes expenditure switching) and 0.43 in a model with flexible prices.
	
	Having established the link between risk sharing and the open-economy output multiplier, we can now derive the consumption multiplier and decompose the effects of risk sharing on the consumption multiplier into a direct and an indirect effect:
	
	\begin{proposition}[Risk Sharing and Consumption Multipliers]\label{prop:Prop2}
		The nominal consumption multiplier $m^c_{i,t}$ is 
		\begin{align}
			m^c_{i,t} = \beta^C_{i,t} \cdot m^Y_{i,t} = (1-\lambda) \cdot \frac{1}{1-\omega \bar{\mathpzc{c}} (1-\lambda)}.
		\end{align}
		A marginal increase in risk sharing (i.e., an increase in $\lambda$) reduces the consumption multiplier through a direct effect ($\frac{\partial \beta^C_{i,t}}{\partial \lambda} \cdot m_{i,t}^Y$) and an indirect effect ($\frac{\partial m_{i,t}^Y}{\partial \lambda} \cdot \beta^C_{i,t}$). The indirect effect's share in the total marginal effect is:
		\begin{align}
			\mathcal{S}_{i,t}^{indirect} \equiv \frac{\frac{\partial m_{i,t}^Y}{\partial \lambda} \cdot \beta^C_{i,t}}{\frac{d m^c_{i,t}}{d \lambda}} = \omega \bar{\mathpzc{c}} (1-\lambda). 
		\end{align}
	\end{proposition}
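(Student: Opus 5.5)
The plan is to combine equation (\ref{eq:response_C}) with Proposition \ref{prop:Prop1} and then differentiate with respect to $\lambda$ using the product rule. Equation (\ref{eq:response_C}) gives $m^C_{i,t} = \beta^C_{i,t}\, m^Y_{i,t}$. Proposition \ref{prop:Prop1} supplies $\beta^C_{i,t} = 1-\lambda$ and $m^Y_{i,t} = (1-\omega\bar{\mathpzc{c}}(1-\lambda))^{-1}$, so the stated closed form for $m^c_{i,t}$ follows by substitution.

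Next I treat $\omega$ and $\bar{\mathpzc{c}}$ as fixed steady-state objects that do not depend on $\lambda$. This is a point to state explicitly, since $\bar{\mathpzc{c}}$ is a steady-state ratio, and in the symmetric steady state it is unaffected by the tax curvature. I then write
\begin{align*}
\frac{d m^c_{i,t}}{d\lambda} = \frac{\partial \beta^C_{i,t}}{\partial\lambda}\, m^Y_{i,t} + \frac{\partial m^Y_{i,t}}{\partial\lambda}\,\beta^C_{i,t},
\end{align*}
which defines the direct and indirect terms as in the statement. For brevity let $a\equiv\omega\bar{\mathpzc{c}}$ and $x\equiv 1-\lambda$.

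\textbf{Direct term.} It equals $-m^Y = -1/(1-ax)$.

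\textbf{Indirect term.} Differentiating $m^Y = (1-ax)^{-1}$ gives $\partial m^Y/\partial\lambda = -a/(1-ax)^2$. Multiplying by $\beta^C = x$ yields $-ax/(1-ax)^2$.

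Both terms are negative, which establishes that a marginal increase in $\lambda$ lowers $m^c$ through each channel. This uses $ax<1$, which holds because $\omega\le 1$, $\bar{\mathpzc{c}}\le 1$ and $x\le 1$; I will note the degenerate corner $a=x=1$ separately.

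\textbf{Total and share.} The total derivative is $-\bigl[(1-ax)+ax\bigr]/(1-ax)^2 = -1/(1-ax)^2$. Dividing the indirect term by this total gives $\mathcal{S}^{indirect}_{i,t} = ax = \omega\bar{\mathpzc{c}}(1-\lambda)$, as claimed.

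The only step I expect to need care is justifying that $\bar{\mathpzc{c}}$ and $\omega$ are invariant to $\lambda$, so that they are held fixed when differentiating. I would argue this from the symmetric non-stochastic steady state, where $C^{nom}$ relative to $Y^{nom}$ is pinned down independently of $\lambda$ because (\ref{eq:risk_sharing_lambda}) evaluated at steady state is normalized through $\kappa_i$. The rest of the argument is routine algebra.
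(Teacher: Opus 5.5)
Your proposal is correct and takes essentially the same route as the paper: substitute Proposition~\ref{prop:Prop1} into $m^c_{i,t}=\beta^C_{i,t} m^Y_{i,t}$, then use the product rule to split $d m^c_{i,t}/d\lambda$ into $-m^Y_{i,t}$ and $-\omega\bar{\mathpzc{c}}(m^Y_{i,t})^2\beta^C_{i,t}$, and take the ratio (the paper writes it as $\omega\bar{\mathpzc{c}}\beta^C_{i,t}/\bigl((m^Y_{i,t})^{-1}+\omega\bar{\mathpzc{c}}\beta^C_{i,t}\bigr)$ before simplifying to $\omega\bar{\mathpzc{c}}(1-\lambda)$). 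Your explicit note that $\omega$ and $\bar{\mathpzc{c}}$ are held fixed when differentiating is left implicit in the paper; one small correction is that the indirect term is only weakly negative, since it vanishes at $\lambda=1$.
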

	\textbf{Proof:} The nominal consumption multiplier follows directly from $m^c_{i,t} = \beta^C_{i,t} \cdot m_{i,t}^Y$ and the expressions for $\beta^C_{i,t}$ and $m_{i,t}^Y$ in Proposition 1. Using the product rule to decompose the marginal effect of a change in $\lambda$ on the consumption multiplier yields
	\begin{align} 
	\frac{d m^c_{i,t}}{d\lambda} = \frac{\partial \beta^C_{i,t}}{\partial \lambda} \cdot m_{i,t}^Y + \frac{\partial m_{i,t}^Y}{\partial \lambda} \cdot \beta^C_{i,t} = -m_{i,t}^Y - \omega \bar{\mathpzc{c}} \left(m^Y_{i,t}\right)^2 \beta^C_{i,t}.
	\end{align} 
	Then, the share of the indirect effect in the total marginal effect is
	\begin{align} 
	\mathcal{S}_{i,t}^{indirect} \equiv \frac{\frac{\partial m^Y_{i,t}}{\partial \lambda} \cdot \beta^C_{i,t}}{\frac{d m^c_{i,t}}{d \lambda}} = \frac{\omega \bar{\mathpzc{c}}\beta^C_{i,t}}{ \left(m^Y_{i,t}\right)^{-1} +\omega \bar{\mathpzc{c}} \beta^C_{i,t}}.
	\end{align}
	Inserting the expressions for $\beta^C_{i,t}$ and $m^Y_{i,t}$ yields the result. 
	$\blacksquare$  
	
	The dashed red line in Panel (a) of Figure \ref{fig:simpleModel} depicts the consumption multiplier $ m^C = (1-\lambda) \cdot \frac{1}{1-\omega (1-\lambda)}$. Under financial autarky, consumption and output multipliers coincide since $C^{nom}_{i,t} = Y^{nom}_{i,t}$ at all times by definition. As financial integration increases, the consumption multiplier declines through two channels: risk sharing directly reduces consumption volatility for a given level of output volatility, and it indirectly reduces output volatility itself (as shown by the declining blue line).
	
	Panel (b) of Figure \ref{fig:simpleModel} illustrates the indirect effect's contribution to the total effect, $	\mathcal{S}^{indirect}$. For low levels of financial market integration, the indirect effect of reducing output volatility dominates. Modest changes in financial market integration can substantially reduce consumption volatility---the consumption multiplier falls by a factor of five when risk sharing increases from $\lambda=0$ to $\lambda=0.5$. The relevance of the indirect effect is directly related to the degree of home bias: greater home bias amplifies the feedback loop as output becomes more dependent on domestic consumption. Consequently, the stabilization gains from financial integration are larger for economies that are relatively closed to international trade in goods.
	
	\begin{figure}
		\centering
		\begin{subfigure}[b]{0.48\linewidth}
			\centering
			\includegraphics[width=\linewidth]{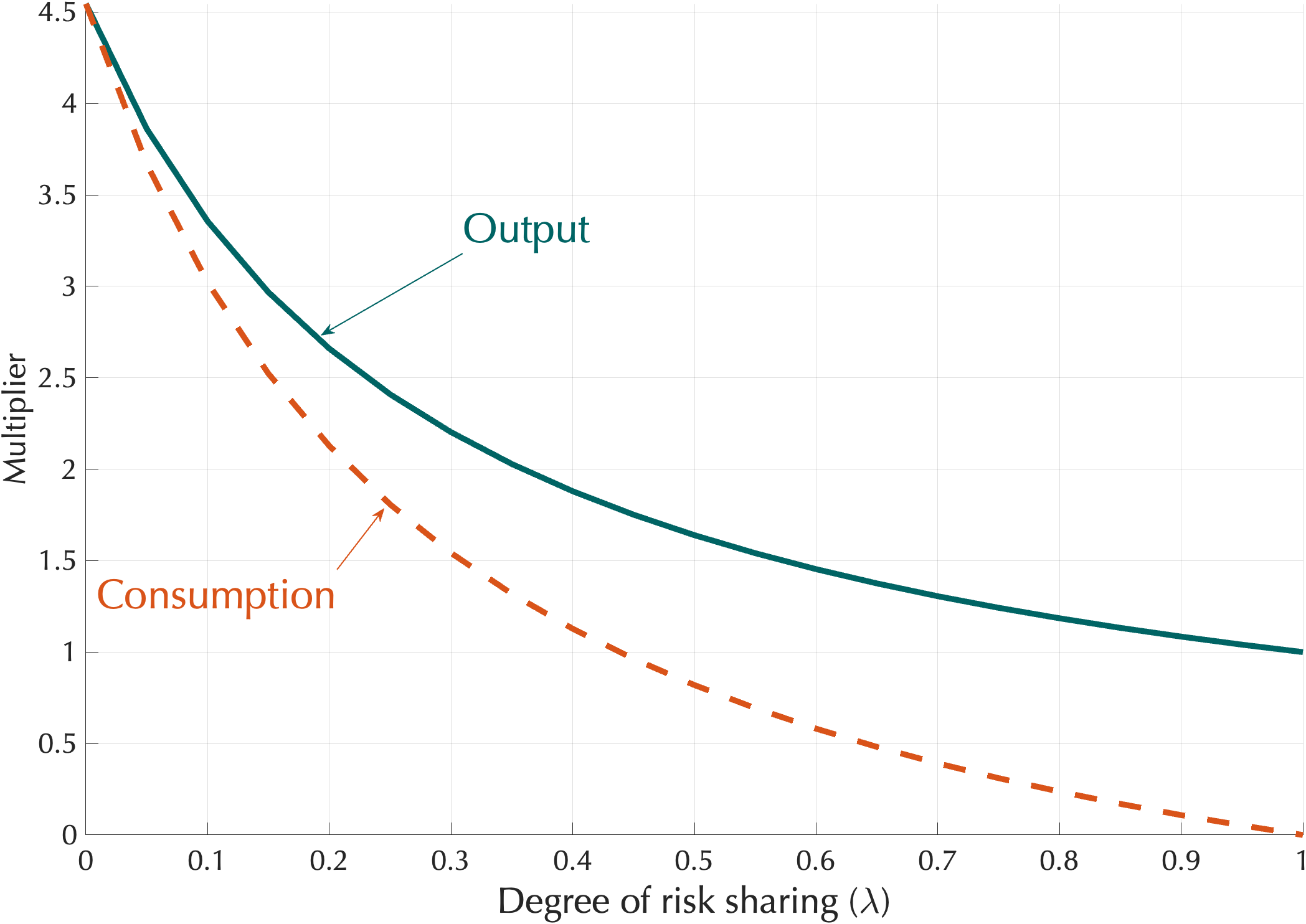}
			\caption{Open-economy relative multipliers}
		\end{subfigure}
		\hfill
		\begin{subfigure}[b]{0.48\linewidth}
			\centering
			\includegraphics[width=\linewidth]{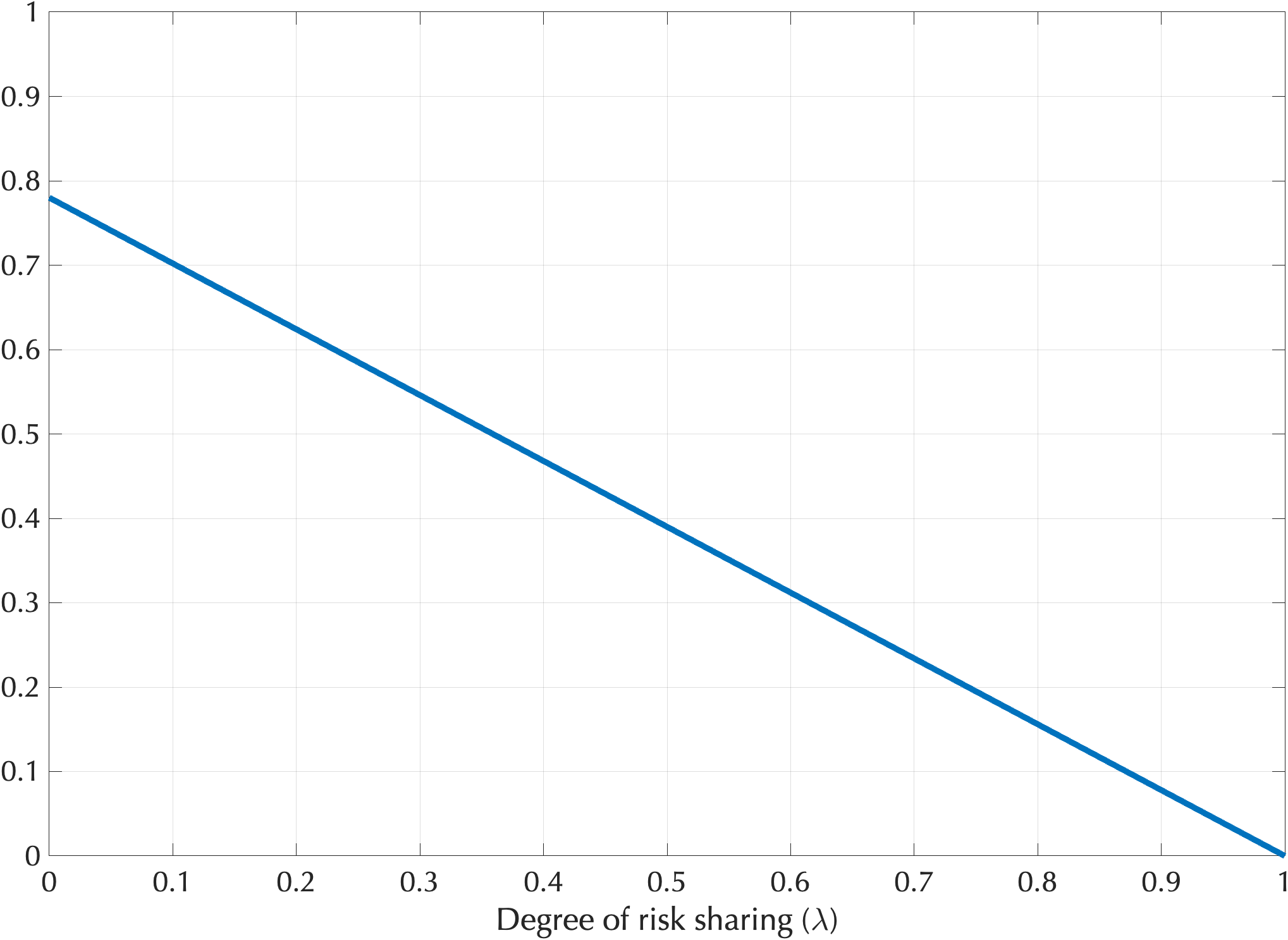}
			\caption{Share of the indirect effect}
		\end{subfigure}
		\caption{Risk sharing and multipliers} 
		\label{fig:simpleModel}
		\begin{threeparttable}
			\begin{tablenotes}[]                                                                   
				\item {\footnotesize \textit{Notes:} Panel (a) displays the relationship between the index of risk sharing on the x-axis and the open-economy relative multiplier for output, $m^Y = \frac{1}{1-\omega\bar{\mathpzc{c}} (1-\lambda)}$ and for consumption, $m^C = (1-\lambda) m^Y$, on the y-axis. The home bias parameter is set to $\omega=0.78$ and the share of consumption to $\bar{\mathpzc{c}}  \approx 1$. Panel (b) displays the share of the indirect effect in the total marginal effect of an increase in risk sharing, $\lambda$, on the consumption multiplier, $m^C$, $\mathcal{S}^{indirect} = \omega(1-\lambda)$.}
			\end{tablenotes}   
		\end{threeparttable}
	\end{figure}

	\section{Empirical Analysis} \label{sec:empirical_results}
	The previous section established that risk sharing reduces consumption volatility through two channels: (1) an \emph{indirect effect}, whereby it dampens the response of output to a given shock, and (2) a \emph{direct effect}, by which it smooths consumption for a given change in output. This section empirically quantifies these two effects using regional variation in U.S. military spending.
	
	We proceed in three steps. First, we specify our strategy for estimating the open-economy relative output multiplier, $m^Y$, and the pass-through to consumption, $\beta^C$. Second, we detail how we decompose this pass-through using the framework of ASY to identify the contributions of specific risk-sharing channels. Finally, we present our instrumental variable strategy, a key departure from the existing literature that allows for a causal interpretation of our findings. These estimates will discipline our quantitative model in Section \ref{sec:model}.

	\subsection{Direct and Indirect Effects of Risk Sharing}
	To quantify the \emph{indirect effect} or risk sharing we first estimate the open-economy relative output multiplier $m^Y$, by regressing the change in a state's nominal GDP on federal military spending shocks:\footnote{Unlike \cite{nakamura2014fiscal} and \cite{dupor2017local}, who use two-year differences, we focus on one-year differences. We can do this, since we adjust the fiscal-year military spending data to match the calendar year that the other macroeconomic variables follow. While this approach may not fully capture timing mismatches between military contract awards and actual expenditures (see \citealt{briganti2025does}), our results are robust to using a two-year difference specification, which we report in Appendix C.3.}
	\begin{align}
		\label{eq:main_regression} 
		\Delta \ln GDP_{i,t} = \alpha_i + \alpha_t + m^Y \frac{g_{i,t} - g_{i,t-1}}{gdp_{i,t-1}} + \mathbb{Z}_{i,t} +  \varepsilon_{i,t} 
	\end{align} 	
	where $GDP_{i,t}$ is aggregate nominal GDP in state $i$ at time $t$, $gdp_{i,t}$ is per-capita nominal GDP, $g_{i,t}$ denotes per-capita nominal federal military spending and $\mathbb{Z}_{i,t}$ is a vector of control variables described below. 
	
	To quantify the \emph{direct effect}, we estimate the consumption multiplier $m^c$, by replacing the outcome variable with the log change in per-capita consumption. The unsmoothed share of the income shock, $\beta^C$, is then the ratio of these two coefficients:
	\begin{align}
		\label{eq:beta_C} 
		\beta^C_t = \frac{m^c_t}{m^Y_t}
	\end{align} 	
	
	This ratio captures the pass-through from income to consumption in response to a change in external demand. We estimate standard errors for this parameter using the variance-covariance matrix of the estimates, which we estimate by stacking all regressions and saturating the model with sample indicators \citep{angrist2023methods,angrist2023instrumental}.
	
	To trace these causal effects over time, we also estimate \cite{jorda2005estimation}-style local projections for each outcome. This yields cumulative multipliers $m^Y_h$ over a horizon of $h$ years \citep{ramey2019ten}: 
	\begin{align}
		\label{eq:dynamic_regression} 
		\sum_{s=0}^h \left(\ln GDP_{i,t+s} - \ln GDP_{i,t-1}\right)  = \alpha_{h,i} + \alpha_{h,t} + m^Y_h \sum_{s=0}^h  \frac{g_{i,t+s} - g_{i,t-1}}{gdp_{i,t-1}} + \mathbb{Z}_{h,i,t} +  \varepsilon_{h,i,t} 
	\end{align}

	\subsection{Decomposition Into Risk-Sharing Channels}
	To understand the mechanisms underlying the observed degree of smoothing ($\beta^C$), we decompose the wedge between aggregate GDP and per-capita consumption. Following the seminal work by ASY, later extended by \cite{parsley2021risk} and \cite{kohler2023risk}, we use the accounting identity: 
	\begin{align} 
	GDP_{i,t} = \frac{GDP_{i,t}}{gdp_{i,t}} \cdot \frac{gdp_{i,t}}{pi_{i,t}} \cdot \frac{pi_{i,t}}{di_{i,t}} \cdot \frac{di_{i,t}}{pce_{i,t}} \cdot pce_{i,t},
	\end{align} 
	where $i$ indexes U.S. states and $t$ denotes years.  The variables represent aggregate GDP ($GDP_{i,t}$), per capita GDP ($gdp_{i,t}$), per capita personal income ($pi_{i,t}$), per capita disposable income ($di_{i,t}$), and per capita personal consumption expenditure ($pce_{i,t}$).\footnote{We convert all variables to per-capita terms using working-age population, as this demographic is more likely to respond to changing economic conditions through migration.} 
	
	Taking logs and first differences yields:
	\begin{align}
		\label{eq:decomposition}
		\begin{aligned}
			\Delta \ln GDP_{i,t} ={}& \underbrace{ \Delta \ln GDP_{i,t} - \Delta \ln gdp_{i,t} }_{\text{Migration}} + \underbrace{ \Delta \ln gdp_{i,t} - \Delta \ln pi_{i,t} }_{\text{Capital markets}} \\
			&+ \underbrace{ \Delta \ln pi_{i,t} - \Delta \ln di_{i,t} }_{\text{Fiscal transfers}} + \underbrace{ \Delta \ln di_{i,t} - \Delta \ln pce_{i,t} }_{\text{Credit markets}} + \underbrace{\Delta \ln pce_{i,t}}_{\text{Unsmoothed}}.
		\end{aligned}
	\end{align}
	This decomposition identifies four distinct risk-sharing channels that can absorb fluctuations in aggregate GDP before they translate into per-capita consumption changes, introducing a wedge between $m^Y$ and $m^c$. First, \emph{migration ($M$)}---measured as the difference between changes in aggregate and per-capita GDP---captures how worker mobility dampens local economic shocks. For example, outward migration during recessions raises per-capita GDP for remaining residents. Second, \emph{capital market smoothing ($K$)} represents the difference between changes in per-capita GDP and personal income, capturing factor income flows across regions through dividends, interest payments, commuting income, capital depreciation, and retained earnings. Third, \emph{fiscal transfers ($F$)} create a wedge between changes in personal income and disposable income, reflecting cross-regional redistribution through taxes and transfers. Fourth, credit markets allow households to smooth consumption through borrowing and lending in \emph{credit markets ($B$)}, represented by differences between changes in disposable income and consumption. Any fluctuations not absorbed by these four channels remain unsmoothed, resulting in a direct comovement between aggregate GDP and per-capita consumption.
	
	We use the different outcome variables in align (\ref{eq:decomposition}) for each risk-sharing channel. We estimate the response of each outcome variable to the military spending shock as in align (\ref{eq:main_regression}). We then calculate the normalized contribution of each channel, $\beta$, as its estimated response relative to the total output response, $m^Y$, as in (\ref{eq:beta_C}). These shares, by construction, sum to one:
	\begin{align}
	\beta^M + \beta^K + \beta^F + \beta^B + \beta^C = 1.		
	\end{align}

	\subsection{Identification: Military Spending Shocks} 
	A central contribution of our paper is the causal identification of the different risk-sharing measures. While ASY and subsequent literature have relied on unconditional variance decompositions, we estimate the response to an identified external demand shock using an instrumental variable (IV) approach. This strategy allows us to move from correlation to causation, providing a more robust foundation for our counterfactual analysis.
	
	Following \cite{nakamura2014fiscal} and \cite{dupor2017local}, we address the endogeneity of state-level military spending with a shift-share instrument. The instrument combines national changes in military spending with cross-sectional variation in states' exposure to military contracts:
	\begin{align}
		\label{eq:instr} 
		\text{instr}_{i,t} = \left(\frac{1}{2}\sum_{n=t-3}^{t-2}\frac{s_{i,n}^G}{s_{i,n}^{GDP}}\right) \cdot \frac{g_{t} - g_{t-1}}{gdp_{t-1}},
	\end{align}
	where $\frac{g_{t} - g_{t-1}}{gdp_{t-1}}$ is the national shift component (changes in per-capita military spending relative to lagged per-capita GDP), and $\frac{s_{i,n}^G}{s_{i,n}^{GDP}}=\frac{g_{i,n}/g_{n}}{gdp_{i,n}/gdp_{n}}$ captures each state's relative exposure to military spending. States with historically higher military spending per dollar of GDP experience larger shocks when national military spending changes. This identification strategy is valid if national military spending decisions reflect geopolitical considerations rather than state-specific economic conditions.
	
	To ensure instrument validity, we implement several precautions based on recent methodological advances in shift-share designs \citep{adao2019shift, goldsmith2020bartik, borusyak2022quasi}. First, we lag exposure shares by two years to reduce concerns about their potential endogeneity. Second, we include exposure shares as control variables to account for potential correlations between these shares and economic outcomes. Third, we incorporate time fixed effects to absorb aggregate temporal variation such as variation in monetary policy, federal tax policy and (lags of) national shifts in military spending. Fourth, we include lags of state-level military spending changes to isolate unexpected spending shocks.\footnote{Our results are robust to adding a greater set of control variables, including lags of dependent variables and further lags of spending changes.} Finally, we obtain unbiased standard errors by clustering our regressions at the treatment level---the state level in this case. To ensure our regressions account for the relative importance of different states for output fluctuations, we weight our regressions by the states' shares of national GDP.
	
	\subsection{Data} 
	\paragraph{Geographical Coverage and Sample Period.}
	Our analysis covers the 50 U.S. states. The sample spans 1966--2019, with 1966 marking the first year for which state-level military spending data are available.
	
	\paragraph{Data sources.}
	Data on state GDP, personal income, disposable income, and consumption are from the Bureau of Economic Analysis (BEA).\footnote{Appendix B.1 provides additional details on data sources and includes a table decomposing the relationship between GDP and personal consumption.} We focus on consumption of non-durable goods and services, as this measure aligns more closely with the standard international risk-sharing framework where purchases of goods coincide with their consumption.\footnote{As shown in Appendix C.2, durable consumption responds more strongly to shocks to output, consistent with the view that durables act as an additional saving device.} Data on state-level working-age population are from the National Cancer Institute's Surveillance, Epidemiology, and End Results Program.
	
	Military spending data are constructed from the universe of U.S. military prime contracts, aggregated to the state level. Data before 2004 are from the Department of Defense's Directorate for Information Operations and Control archives; data from 2004 onward are from USAspending.gov. We convert the data from fiscal to calendar years by assuming equal monthly spending within a fiscal year.\footnote{For instance, starting in 1977, fiscal years run from October 1st of the previous calendar year to September 30th of the current calendar year. We allocate one-quarter of that fiscal year's spending to the previous calendar year and three-quarters to the current calendar year. This adjustment primarily affects the impact estimates, where its omission attenuates the multiplier due to the mismatch between the shock and outcomes. The discrepancy fades in the medium run. Appendix C.4 presents the full comparison.} All variables are in nominal terms.
	
	\paragraph{Consumption Proxy.}
	Official state-level consumption data are available only from 1997 onward. For prior years, we construct a proxy for state-level consumption growth by allocating national consumption growth to states based on their employment growth in sectors producing non-tradable, non-durable consumption goods. This proxy tracks actual consumption dynamics closely: in the years where both series are available (1997 onward), the average correlation between our proxy and actual state-level consumption growth is 0.94, significantly outperforming retail-based alternatives.\footnote{Alternative proxies used in the literature, such as retail employment growth (0.91) or retail sales growth (0.22), show weaker correlations \citep{asdrubali1996channels, hoffmann2019channels, guren2021we}. See Appendix B.2 for construction details.}
	
	Still, we also re-ran our estimates for the output and consumption multipliers on a shorter sample starting in 1997 (see Appendices B.2 and C.7). Since there is limited identification power of the post-1997 sample, we use a two-sample two-stage least squares approach where we estimate the first stage on the full sample and the second stage on the later sample where we only use official BEA consumption data. We find that our results are not driven by measurement error in the proxy, with our estimates of $\beta^C$ being almost the same in the two samples. 
	
	\subsection{Results}
	Table~\ref{tab:main_results} presents our main empirical results. We first establish the strength of our instrumental variable approach using the weak-instrument-robust test of \cite{olea2013robust}. The effective first-stage F-statistic is 113.6 on impact, which exceeds the 5-percent worst-case bias critical value of 37.4 and indicating strong instrument relevance.
	
	\begin{table}[!ht]
		\centering
		\caption{\textsc{The Effects of Military Spending Shocks on Output and Consumption}}
		\label{tab:main_results}
		\begin{threeparttable}
			
			\sisetup{
				table-format          = 1.2,
				input-symbols         = (),
				table-space-text-post = ***,
				input-decimal-markers = .
			}
			
			\renewcommand{\arraystretch}{1}

			\newcommand{\hdrA}{\shortstack{Impact}}
			\newcommand{\hdrB}{\shortstack{Medium-Run}}
			\newlength{\blockw}
			\settowidth{\blockw}{\hdrB}
			
			\begin{tabular}{l d{2.3} d{2.3}}
				\toprule
				& \multicolumn{1}{c}{\makebox[\blockw][c]{\hdrA}}
				& \multicolumn{1}{c}{\makebox[\blockw][c]{\hdrB}} \\
				& \multicolumn{1}{c}{(1)}      & \multicolumn{1}{c}{(2)}      \\
				\midrule
				\multicolumn{3}{l}{\textit{Panel A: Multipliers}} \\
				\addlinespace
				\quad Output Multiplier ($m^Y$)         & 1.31^{**}   & 2.11^{***} \\
				& \footnotesize (0.58) & \footnotesize (0.43) \\
				\addlinespace
				\quad Consumption Multiplier ($m^C$)     & 0.62^{***}  & 0.71^{***} \\
				& \footnotesize (0.22) & \footnotesize (0.21) \\
				\midrule
				\addlinespace
				\multicolumn{3}{l}{\textit{Panel B: Decomposition of Risk-Sharing Channels}} \\
				\quad Migration ($\beta^M$)              & 0.07         & 0.19^{***} \\
				& \footnotesize (0.10) & \footnotesize (0.06) \\
				\addlinespace
				\quad Capital Markets ($\beta^K$)        & 0.12         & 0.15   \\
				& \footnotesize (0.22) & \footnotesize (0.10) \\
				\addlinespace
				\quad Fiscal Transfers ($\beta^F$)       & 0.31^{***}   & 0.22^{***} \\
				& \footnotesize (0.11) & \footnotesize (0.05) \\
				\addlinespace
				\quad Credit Markets ($\beta^B$)         & 0.02         & 0.10   \\
				& \footnotesize (0.11) & \footnotesize (0.07) \\
				\addlinespace
				\quad Unsmoothed Component ($\beta^C$)   & 0.47^{**}     & 0.34^{***} \\
				& \footnotesize (0.21) & \footnotesize (0.09) \\
				\bottomrule
			\end{tabular}
			
			\begin{tablenotes}[flushleft]
				\footnotesize
				\item \textit{Notes:} The table reports estimates of regression (\ref{eq:dynamic_regression}). The impact response sets $h=0$, the medium response sets $h=3$. Panel A reports the open-economy relative output multiplier ($m^Y$) and the open-economy relative consumption multiplier ($m^c$). Panel B reports the risk-sharing coefficients ($\beta$), which decompose the fraction of the output shock absorbed by each of the four channels, see e.g. align (\ref{eq:beta_C}). The sample period is 1967–2019 ($N=2,600$). Standard errors, clustered by state, are in parentheses. \textit{First-stage $F$-statistics:} $h=0$: 113.6; $h=3$: 226.7. Full IV coefficient tables are reported in Appendix C.1. \textsuperscript{*}p$<$0.10, \textsuperscript{**}p$<$0.05, \textsuperscript{***}p$<$0.01.
			\end{tablenotes}
		\end{threeparttable}
	\end{table}
	
	Panel A of Table \ref{tab:main_results} reports our main findings on the aggregate response. On impact (Column 1), we estimate an open-economy relative multiplier of 1.31 (0.58), implying that an increase in federal military spending amounting to 1 percent of state GDP raises state-level GDP by 1.31 percent in the year of the shock. This point estimate is comparable to the two-year multiplier reported by \cite{nakamura2014fiscal}. Critically, we find that 47 percent of this military-driven increase in output passes through to consumption (the unsmoothed component), as per-capita consumption only rises by 0.62 percent.
	
	
	Panel B decomposes how the remaining 53 percent of the increase in state GDP is absorbed through different risk-sharing mechanisms. The migration channel provides modest initial smoothing, with net in-migration offsetting 7 percent of the aggregate shock. Capital markets smooth an additional 12 percent, primarily through retained earnings or dividend payments to non-residents, though this estimate is imprecise. The federal fiscal transfer system provides the biggest smoothing at 31 percent. Finally, credit markets have a nearly negligible role, absorbing only 2 percent of the income shock, although this coefficient is not statistically significant on impact.
	
	The dynamic response changes somewhat over the medium run (column 2). The output response strengthens over time, with the cumulative multiplier reaching 2.11 three years after the shock. The overall share of unsmoothed income ($\beta^C$) falls to 0.34. This increase in smoothing is driven almost entirely by labor mobility; the migration channel's contribution more than doubles to 0.19, becoming one of the dominant smoothing mechanism in the medium term, consistent with empirical findings in \cite{foschi2025should}. The role of fiscal transfers is reduced by a third, with this decrease being picked up by increasing smoothing through credit markets. All coefficients gain statistical significance over longer horizons, driven in part by increased instrument relevance.

	\subsection{Comparison to OLS} 
	Our findings challenge the conventional view that almost all idiosyncratic state-level income fluctuations are shared across U.S. states. Figure~\ref{fig:IVvsOLS} illustrates this stark contrast against unconditional OLS estimates. 
	We estimate that a military-driven 1 percent increase in GDP causes state-level consumption to increase by 0.47 percent. In contrast, prior literature and our own OLS specification estimate that consumption only rises by 0.14 percent when GDP goes up by 1 percent (see Figure~\ref{fig:IVvsOLS}).\footnote{Using a slightly different sample, \cite{kohler2023risk} and \cite{parsley2021risk} find pass-throughs of 0.19 and 0.16 into consumption.} 
	\begin{figure}[ht!]
		\centering
		\includegraphics[width=0.7\textwidth]{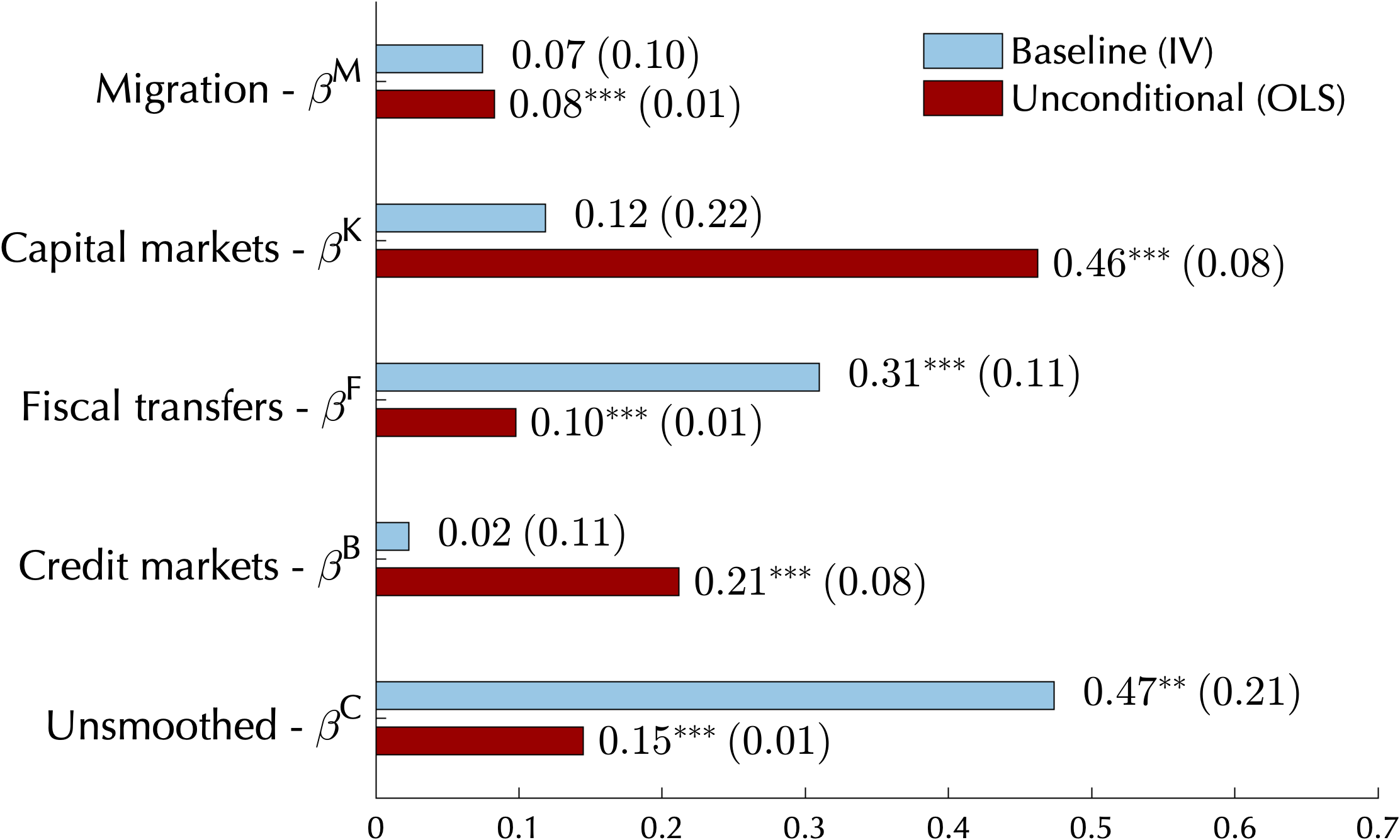}
		\caption{\textsc{Comparison of IV vs OLS risk-sharing channels}}
		\label{fig:IVvsOLS}
		
		\begin{threeparttable}    
			\begin{tablenotes}[flushleft]
				\item \footnotesize{This figure compares risk-sharing coefficients estimated using military spending shocks (IV) with those from unconditional regressions of each channel on GDP growth (OLS). For instance, the $\beta^M$ coefficient in the OLS specification is estimated from regressing $\Delta \ln GDP_{i,t} - \Delta \ln gdp_{i,t}$ on $\Delta \ln GDP_{i,t}$ and a set of state and time fixed effects. } 			                                           
			\end{tablenotes}    
		\end{threeparttable}
	\end{figure}
	
	\subsubsection{Factor Incidence and the Capital Market Channel}
	This striking discrepancy is driven strongly by the capital market channel. Unconditional variance decompositions find that capital markets are the single largest channel for smoothing state-specific shocks. Our OLS regressions confirm this, indicating they absorb 46 percent of every 1 percent increase in output. In our IV specification, however, the contribution of this channel is a statistically insignificant 12 percent. 
	
	This finding raises a puzzle: why does the capital market channel, which appears to be the linchpin of regional risk sharing in unconditional data, play such a diminished role in response to a well-identified government demand shock?
	
	The answer is that risk-sharing channels are not deep structural parameters, but depend critically on the nature of the underlying shock hitting the economy. OLS-based variance decomposition estimates are unconditional, capturing responses to \textit{all} shocks to GDP, whereas our IV estimates isolate a specific government demand shock. We next show that the unconditional dominance of the capital market channel reflects the fact that typical state-level GDP fluctuations are driven by shocks that disproportionately accrue to capital. Since capital income is far more geographically diversified than labor income, shocks that load heavily on capital income are more easily smoothed. Government spending shocks, in contrast, have a more balanced incidence on labor and capital income.
	
	\begin{table}[ht!]
		\centering
		\caption{\textsc{Elasticities of income components to GDP}}
		\label{tab:income_paid_received}
		\begin{threeparttable}
			
			\centering
			\begin{tabular}{l
					S[table-format=1.2] S[table-format=1.2]
					S[table-format=1.2] S[table-format=1.2]}
				\toprule
				& \multicolumn{2}{c}{Labor}
				& \multicolumn{2}{c}{Capital} \\
				\cmidrule(lr){2-3} \cmidrule(lr){4-5}
				& \multicolumn{1}{c}{Unconditional} & \multicolumn{1}{c}{Military spending} & \multicolumn{1}{c}{Unconditional} & \multicolumn{1}{c}{Military spending} \\ & \multicolumn{1}{c}{(OLS)} & \multicolumn{1}{c}{(IV)} & \multicolumn{1}{c}{(OLS)} & \multicolumn{1}{c}{(IV)} \\
				\midrule
				Paid     
				& 0.48$^{***}$ & 1.02$^{***}$ & 1.49$^{***}$ & 1.02$^{***}$ \\
				& {\footnotesize (0.03)} & {\footnotesize (0.27)} &  {\footnotesize (0.04)} & {\footnotesize (0.34)} \\[0.3em]
				Received 
				& 0.45$^{***}$ & 0.95$^{***}$ & 0.62$^{***}$ & 0.55$^{*}$ \\
				& {\footnotesize (0.04)} &  {\footnotesize (0.30)}  & {\footnotesize (0.18)} &  {\footnotesize (0.33)} \\		
				\bottomrule
			\end{tabular}
			
			\begin{tablenotes}[para,flushleft]
				\footnotesize
				\emph{Notes}: Elasticities of state-level labor and capital income with respect to GDP based on unconditional OLS or IV estimates conditional on military spending shocks. ``Paid'' refers to income paid in a state, ``Received'' refers to income received by residents. Sample: 1967--2019. 			
			\end{tablenotes}
			
		\end{threeparttable}
	\end{table}

	To substantiate this argument, we estimate the elasticities of state-level labor and capital income with respect to state GDP, comparing unconditional OLS estimates with our IV estimates. Table~\ref{tab:income_paid_received} displays the results. The unconditional OLS estimates reveal a strongly countercyclical labor share: a 1 percent increase in GDP is associated with only a 0.46 percent increase in labor income paid within the state, but a 1.49 percent increase in capital income. The IV estimates, however, show that the labor share is acyclical in response to a government spending shock, with the elasticities of both labor and capital income paid being approximately one.
	
	This difference in factor incidence matters critically because capital income is far more geographically diversified. The second row of Table~\ref{tab:income_paid_received} shows this by repeating the exercise for income \textit{received} by state residents.\footnote{ Received labor income and received capital income sum to personal income.} The elasticity of received labor income to GDP is nearly identical to that of paid labor income, indicating limited diversification through e.g. cross-state commuting. For capital income, however, the elasticity of received income is less than half that of paid income. This efficient diversification means that when shocks primarily affect capital income---as they appear to do unconditionally---the capital market channel provides substantial risk sharing. When shocks affect labor and capital proportionally---as government spending shocks do---this channel is naturally less important.
	
	The finding that the state-level labor share is strongly countercyclical unconditionally but acyclical in response to a demand shock raises a deeper question: What generates such a strongly countercyclical labor share in the unconditional state-level data? A canonical Cobb-Douglas production function with perfect competition predicts constant factor shares, inconsistent with our findings. Models with sticky wages or sticky prices also fail to generate countercyclical labor shares.\footnote{Sticky wage models predict constant labor shares, while sticky price models generate procyclical labor shares.} While models with overhead labor (e.g., \cite{nekarda2020cyclical,kaplan2024markups}) can generate such dynamics, this mechanism is quantitatively too weak to explain the large effect observed at the state level and does not explain why the effect is so much more pronounced at the state level than nationally.
	
	One potential explanation involves oligopolistic competition and granular shocks. In models like \cite{atkeson2008pricing}, large firms respond to positive idiosyncratic productivity or demand shocks by raising markups as they gain market power. This leads to a decline in the labor share of income. If such shocks are sufficiently geographically concentrated, they could drive local GDP while being averaged out at the national level. This class of models naturally generates volatile capital income and a countercyclical labor share, consistent with the unconditional data.
	
	An alternative, complementary explanation is measurement error. State-level labor income is derived from detailed wage data reported by employers in the Quarterly Census of Employment and Wages, while capital income is estimated indirectly and proves difficult to allocate accurately across states, particularly for firms operating in multiple states. Systematic misattribution of profits could mechanically amplify capital income volatility and inflate the apparent role of capital markets in OLS regressions.\footnote{Specifically, if the capital market channel, $\Delta \ln gdp_{i,t} - \Delta \ln pi_{i,t}$, contains measurement error, this error also enters the regressor, $\Delta \ln GDP_{i,t}$, (see align (\ref{eq:decomposition})). Regressing $\Delta \ln gdp_{i,t} - \Delta \ln pi_{i,t}$ on $\Delta \ln GDP_{i,t}$ yields a coefficient that is upward-biased due to spurious correlation from measurement error appearing on both sides. The coefficients for the other risk-sharing channels would be biased downward due to measurement error of the independent variable.} Our IV estimates are immune to this type of measurement error.\footnote{This discrepancy is consistent with critiques of the unconditional variance decomposition approach; for instance, \cite{del2002asymmetric} argues that much of the smoothing found in the prior literature may be illusory, resulting from measurement error in output rather than genuine risk sharing.} 	
	
	\subsubsection{The Fiscal Transfer System as an Automatic Stabilizer}	
	While the capital markets channel collapses in our IV estimates, the fiscal channel strengthens significantly, rising to 0.31 from 0.10 in OLS. That is, almost a third of all fluctuations in GDP are absorbed through the tax and transfer system according to the IV estimate, rather than just 10 percent. Part of this difference is mechanical because the tax and transfer system operates on per-capita personal income, which responds less to unconditional GDP fluctuations than to GDP fluctuations induced by military spending. As migration and income diversification via capital market already smooth out a large share of unconditional GDP fluctuations, there is less of a role to play for the tax and transfer system.  
	
	A more meaningful comparison is to look at coefficients that refer to a 1 percent increase in per-capita personal income rather than in GDP, i.e. to look at $\frac{\beta^F}{1-\beta^M-\beta^K}$ rather than $\beta^F$. This reveals that for our IV estimate about 0.38 (0.08) of an increase in per-capita personal income is absorbed by the fiscal transfer system. Re-doing the same calculation for the unconditional estimates reveals a lower number of 0.21 (0.04).\footnote{For our baseline military spending shocks, this means normalizing the estimated response of of fiscal transfers to a change in government spending not by the total output response (as in align (\ref{eq:beta_C})), but by the response of per-capita personal income. To get standard errors in the IV case, we proceed as explained after align (\ref{eq:beta_C}). For the OLS case, we regress changes in fiscal transfers, $\Delta \ln pi_{i,t} - \Delta \ln di_{i,t}$, on the log change in per-capita personal income, $\Delta \ln pi_{i,t}$, using $\Delta \ln GDP_{i,t}$ as an instrument. This is equivalent to calculating $\frac{\beta_{OLS}^F}{1-\beta_{OLS}^M-\beta_{OLS}^K}=0.21$.}. That is, the IV estimates still assign a substantially larger role to the tax and transfer system than the unconditional estimates, even after conditioning on the same size change in personal income. 
	
	Part of the gap can be accounted for by the documented differences in elasticities of factor income to GDP. The relevant marginal tax rate is a weighted average of the marginal tax rates on capital income and labor income. As reported by \cite{CBO2014TaxingCapitalIncome}, the marginal tax rate is lower on capital income than on labor income. Since capital income fluctuates more strongly in the unconditional data than in response to the military spending shock, the marginal tax rate is also smaller in the unconditional data. But quantitatively, this effect is small.  
	
	A second explanation lies in the fact that the OLS-based variance decomposition captures responses to all shocks to GDP, including e.g. tax policy shocks that can bias the estimates of the fiscal channel downwards. In the last 60 years, the United States has experienced several large federal tax reforms that have had an unequal impact on U.S. states because the income distribution differs across states.  \cite{zidar2019tax} exploits this fact to construct state-level tax shocks induced by changes in federal tax rates. Using those tax shocks as an instrument in a regression of fiscal transfers on log changes in personal income yields an estimate of 0.19 (0.08). This small estimate helps rationalize why the OLS estimate is smaller than the estimate using military spending as an instrument.\footnote{The estimate of 0.19 suggests that a 1 percent percent drop in personal income (induced by the state-level federal tax shock), lowers disposable income by 0.81 percent. That is, despite the increase in tax rates, the overall tax revenue (less transfers) \textit{falls} due to the strong response of personal income. The tax and transfer system therefore still smooths out some of the fall in personal income. This is consistent with \citep{zidar2019tax} who reports a fairly strong GDP response to tax shocks.} 
	
	To gauge whether our IV estimate of 0.38 is plausible, we back out the effective marginal tax rates implied by this figure and the average share of disposable to personal income (0.945 in our sample). This gives a marginal rate of 42 percent ($=1 - (1-0.38) \times 0.945$), which is somewhat higher than the estimate of about 37 percent reported in \cite{barro2011macroeconomic} for the years 1967 - 2006. \cite{barro2011macroeconomic} calculate average marginal tax rate from the NBER TAXSIM program including taxes and social security contributions, but excluding transfers. More recently, \cite{altig2020marginal} estimate a median marginal tax rate inclusive of transfers of 43 percent for 2018, close to the tax rate implied by our regression results. Our IV estimate therefore seems in line with previous estimates of the marginal tax rate.

	\section{Multi-Region Quantitative Model} 
	\label{sec:model}
	We find that close to half of state-specific income fluctuations induced by government spending shocks are smoothed through various risk-sharing mechanisms. The corresponding open-economy relative multiplier of government spending shocks is 1.31, implying that a \$1 increase in government spending raises GDP by \$1.31. In Section \ref{sec:simple_model}, we proposed a simple model that links these two measures, arguing that risk sharing reduces the size of the multiplier. The model was purposefully simple and only had a generic risk-sharing mechanism to focus on the qualitative result. Still, as we can see from Figure \ref{fig:simpleModel}, which displays the multiplier as a function of the amount of risk sharing, the simple model's \textit{quantitative} predictions are reasonably accurate as well, as it only slightly overpredicts the multiplier for the observed amount of risk sharing (1.58 rather than 1.31).  
	
	But the simple model has its limitations. First, it restricts the trade elasticity to one and therefore cannot match both the output and consumption multiplier. Second, it does not allow us to run counterfactuals that shut down one risk-sharing channel at a time. Finally, it does not speak to possibly heterogenous effects of risk sharing across U.S. states either. In this section, we correct these shortcomings and present a quantitative, multi-region model that replaces the generic risk-sharing mechanism by several risk-sharing channels that we can map more easily to the data. 
	
	The world economy consists of $N$ U.S. states that form a currency union, as well as a rest of the world (RoW). Several model ingredients are standard in this class of models, such as sticky wages, Armington trade and a production process that requires labor and intermediates. As in the simple model, we focus on external demand shocks, modeled as state-specific unpredicted changes in federal government spending, as a driving force.
	
	We complement our baseline model with four features to capture the distinct risk-sharing channels emphasized in the empirical section: First, we allow for labor mobility across states as in \cite{house2025quantifying} to model the migration channel. Second, households' stock holdings are potentially biased towards domestic firms to capture imperfect risk sharing through capital markets. Third, the fiscal transfer channel is modeled through a progressive federal income tax as in \cite{heathcote2017optimal}. Finally, households can trade non-contingent bonds subject to adjustment costs to capture the credit market channel. 
	
	We include a RoW aggregate to account for U.S. states' international trade. Since our focus is on risk sharing \textit{within} the United States in response to state-specific shocks, we treat the U.S. as a small open economy relative to the RoW in the sense that we ignore any feedback effects of U.S. shocks on the RoW. We abstract from any risk sharing between the United States and the RoW through migration, fiscal transfers, or cross-border equity \& bond holdings.

	\subsection{Households, Population and Migration}
	Each U.S. state is populated by immobile capital owners and mobile workers. The number of capital owners is fixed at $\mathbb{N}^k_i$, while the number of workers $\mathbb{N}^w_{i,t}$ can vary as they relocate across states. Total state population is
	\begin{align}
		\label{eq:defPop}
		\mathbb{N}_{i,t}=\mathbb{N}_i^{k}+\mathbb{N}_{i,t}^{w}.
	\end{align}%
	The U.S. population is constant and normalized to 1: $\sum_{i=1}^N \mathbb{N}_{i,t} = 1$. We denote aggregate variables by capital letters and per-capita values by lower-case letters, e.g. $C_{i,t}$ for aggregate consumption and $c_{i,t} = C_{i,t}/\mathbb{N}_{i,t}$ for per-capita consumption. Variables specific to workers or capital owners are expressed per individual of that type, e.g. $c^w_{i,t} = C^w_{i,t}/\mathbb{N}^w_{i,t}$.


	\subsubsection{Capital Owners} 
	Capital owners receive income from dividends, interest, and bond holdings. In state $i$, they own a fixed share $\kappa$ of local firms, which pay dividends $P_{i,t} d_{i,t}$ per owner, where $P_{i,t}$ is the price of state $i$’s final good and $d_{i,t}$ are real dividends. They also own a state-specific share $\kappa^{US}_i$ of a nationally diversified portfolio, which pays $d^{US}_t \equiv \frac{\sum_i P_{i,t} D_{i,t}}{\sum_i \mathbb{N}^k_i}$, with $D_{i,t} = \mathbb{N}^k_i d_{i,t}$. The parameter $\kappa$ governs home bias: as $\kappa \to 0$, portfolios are fully diversified. \textit{We later calibrate $\kappa$ to match the empirically observed strength of the capital market channel.} The parameters $\kappa^{US}_i$ are chosen so that net dividend flows are zero in steady state.\footnote{In particular, we impose $\kappa \bar{d}_i + \kappa^{US}_i \bar{d}^{US} = \bar{d}_i$ in the non-stochastic steady state, implying $\kappa^{US}_i = (1-\kappa) \frac{\bar{d}_i}{\bar{d}^{US}}$. That is, capital owners in high-productivity states that pay out more dividends than the national average, $\bar{d}_i > \bar{d}^{US}$, can afford more of the national portfolio, $\kappa^{US}_i > 1-\kappa$.}
	
	Capital owners also trade bonds. A bond purchased in $t-1$ with face value $b_{i,t-1}$ pays nominal interest $i_{t-1}$ in $t$. Adjusting bond position entails quadratic costs $\frac{\iota \widebar{gdp}_i}{2} \left( \frac{b_{i,t-1}}{\widebar{gdp}_i}\right)^2$, where $\widebar{gdp}_i$ is steady-state per-capita GDP and $\iota \geq 0$ scales the adjustment cost.\footnote{These adjustment costs are common in the literature on open-economy models because they induce stationarity \citep{schmitt2003closing}. A very small value of $\iota$ is enough to induce stationarity. More recently, calibrations with larger values of $\iota$ have been proposed to model segmentation in international financial markets, see e.g. \cite{maggiori2022international}.} The higher $\iota$, the more costly it is for capital owners to use bonds to smooth income shocks. We scale $\iota$ by per-capita GDP, which makes adjustment costs equally bite across U.S. states. \textit{We later calibrate $\iota$ to match the empirically observed strength of the credit market channel.} Taken together, nominal capital income is
	\begin{align}
	P_{i,t} y^k_{i,t} = \kappa P_{i,t} d_{i,t} + \kappa^{US}_i d^{US}_t + b_{i,t-1} i_{t-1} - \frac{\iota \widebar{gdp}_i}{2} \left( \frac{b_{i,t-1}}{\widebar{gdp}_i}\right)^2.	
	\end{align}
	Income is taxed according to a log-linear tax and transfer function \citep{heathcote2017optimal}: For any pre-government income $P_{i,t} y^k_{i,t}$, the disposable income is given by $\mu_{i,t} \left(P_{i,t} y^k_{i,t} \right)^{1-\tau}$, where $\mu_{i,t}$ captures the tax level (and might vary to balance the government budget), and $\tau < 1$ measures tax progressivity. If $\tau=0$, taxation is proportional, so disposable income rises one-for-one with pre-tax income and the fiscal system does not contribute to consumption smoothing. Positive values of $\tau$ indicate a progressive tax system. In general, a 1 percent increase in pre-tax income raises post-tax income by $1-\tau$ percent. The larger $\tau$, the stronger the fiscal transfer channel. \textit{We later calibrate $\tau$ to match the empirically observed strength of the fiscal transfer channel.}
	
	Capital owners allocate disposable income between consumption and savings. They choose consumption, $c^k_{i,t}$, and bond holdings, $b_{i,t}$, to maximize expected lifetime utility $\mathbb{E}_t \sum_{j=0}^\infty \beta^j \ln \left( c^k_{i,t+j}\right)$
	subject to the budget constraint 
	\begin{align}
		\label{eq:budgetConstraintK}	
		P_{i,t} c^k_{i,t} + b_{i,t} - b_{i,t-1} = \mu_{i,t} \left(P_{i,t} y^k_{i,t} \right)^{1-\tau}.
	\end{align}
	The optimal demand for bonds satisfies the following Euler align 
	\begin{align}
		\label{eq:intlEuler}
		1&= \beta\mathbb{E}_t \left\{  \frac{c_{i,t}^k}{\pi_{i,t+1} c_{i,t+1}^{k}}   \left[ 1 + (1-\tau) \mu_{i,t+1} \left( P_{i,t+1} y^k_{i,t+1} \right)^{-\tau} \left( i_t - \iota \frac{b_{i,t}}{\widebar{gdp}_i} \right)\right]\right\} 
	\end{align}
	with $\pi_{i,t+1} = \frac{P_{i,t+1}}{P_{i,t}}$ denoting inflation. Without taxation ($\tau=0$, $\mu_{i,t}=0$) and adjustment costs ($\iota=0$), this collapses to the standard Euler align. 
	
	
	\subsubsection{Workers}
	Workers are mobile and earn only labor income, given by $P_{i,t} y^w_{i,t} = W_{i,t} l_{i,t}$. Their income is taxed through the same log-linear tax and transfer function as for capital owners. Assuming workers are hand-to-mouth, consumption equals disposable income:
	\begin{align}
		P_{i,t} c_{i,t}^{w}=\mu_{i,t} \left( P_{i,t} y^w_{i,t} \right)^{1-\tau}.
		\label{eq:budget_constraint_worker}
	\end{align}%
	Labor supply is inelastic, so the only choice workers face is where to work. Before turning to migration, we describe how effective labor $l_{i,t}$ is determined.
	
	\paragraph{Labor Supply per Worker} 
	Workers supply fixed amounts of labor, normalized to 1, in their state of residence. As in \cite{erceg2000optimal}, workers are randomly assigned a job $\iota \in \left[ 0,1\right] $ that pays a nominal wage $W_{i,t}(\iota)$. For each job, there is a labor union with market power that acts in the interest of its members in setting a wage rate. 
	
	Job-specific labor $l_{i,t}\left( \iota \right) $ is employed by competitive labor-aggregating firms who sell aggregate effective labor to goods-producing firms at wage $W_t$. Labor-aggregating firms choose a combination of jobs $l_{i,t}\left( \iota \right)$ to maximize their profits $\left\{W_{i,t} l_{i,t}-\int_{0}^{1}W_{i,t}(\iota )l_{i,t}(\iota )d\iota \right\}$
	subject to the definition of effective labor $l_{i,t}$:
	\begin{align}
		l_{i,t}=\zeta + \left(\int_{0}^{1} \left(l_t(\iota) - \zeta\right)^\frac{\psi_w-1}{\psi_w}d\iota\right)^\frac{\psi_w}{\psi_w-1}
		\label{labor combination},
	\end{align}
	where $\psi_w>1$ is the elasticity of substitution across jobs.\footnote{The parameter $\zeta>0$ in this aggregator ensures a positive equilibrium wage in the presence of inelastic labor supply \citep{house2025quantifying}.} This maximization problem gives rise to a labor demand curve for each job. Each period, labor unions can reset wages for individual jobs, $W_{i,t}\left( \iota \right) $, with probability $1-\theta_w$. If they can adjust the wage, labor unions set it to maximize the expected net present value of total labor income for their workers taking the demand curve for each job as given. Solving this optimization problem yields the following labor supply curve (see Appendix D.2):
	\begin{align}
		\label{eq:wagePC}
		\tilde{\pi} _{i,t}^{w}=\frac{\left( 1-\theta _{w}\beta \right) \left( 1-\theta
			_{w}\right) }{\theta _{w}}\tilde{l}_{i,t} +\beta \mathbb{E}_{t}\left[ \tilde{\pi}_{i,t+1}^{w}\right] ,
	\end{align}%
	where a tilde denotes log deviations from the non-stochastic steady state and $\pi^w_{i,t}$ is wage inflation at time $t$.

	\paragraph{Migration}
	At the start of each period, workers decide whether to migrate or remain in their
	current state. Migration takes place at the beginning of each period and
	migrants immediately work and consume in their new location.
	
	A worker moving from state $i$ to state $j$ faces a migration cost $\upsilon_j^i$ (with $\upsilon_i^i = 0$). In addition, workers draw idiosyncratic destination-specific shocks $\epsilon_{j,t}$.\footnote{Each worker draws their own $\epsilon_{j,t}$, but we suppress the individual index for notational ease.} Let $v_{i,t}(\epsilon_t)$ denote the value of living in state $i$ at time $t$ given the vector of shocks $\epsilon_t = [\epsilon_{1,t}, \ldots, \epsilon_{N,t}]$ and aggregate conditions. Then
	\begin{align}
		v_{i,t}(\epsilon_t) = \max_j \left\{ \ln c^w_{j,t} + \frac{1}{\gamma}\epsilon_{j,t} - \upsilon_j^i + \beta \mathbb{E}_t[V_{j,t+1}] \right\}.
	\end{align}
	Flow utility is $\ln c^w_{i,t}$. The ex-ante value $V_{i,t}$ averages over the idiosyncratic shocks and represents the expected utility of workers in state $i$ at the start of period $t$. The parameter $\gamma $ governs how strongly
	idiosyncratic location shocks affect migration decisions.
	
	Following \cite{artucc2010trade}, we assume shocks are i.i.d. across workers and time, drawn from a Type-I extreme value distribution with mean zero. Under this assumption,
	\begin{align}
		V_{i,t}=\frac{1}{\gamma }\ln \left\{ \sum_{j}\exp \left\{ \gamma \left(
		\ln c_{j,t}^{w} -\upsilon _{j}^{i}+\beta \mathbb{E}%
		_{t}\left( V_{j,t+1}\right) \right) \right\} \right\} .
		\label{eq:utilityWorker}
	\end{align}%
	Migration decisions depend on this average utility. Let $n^i_{j,t}$ be the fraction of workers moving from $i$ to $j$. Then, the number of workers living in state $i$ evolves as
	\begin{align}
		\mathbb{N}_{i,t}^{w}=\sum_{j}n_{i,t}^{j}\mathbb{N}_{j,t-1}^{w}.
	\end{align}
	with
	\begin{align}
		n_{j,t}^{i}=\frac{\exp \left\{ \gamma \left( \ln
			c_{j,t}^{w}  -\upsilon _{j}^{i}+\beta \mathbb{E}_{t}\left(
			V_{j,t+1}\right) \right) \right\} }{\sum_{k}\exp \left\{ \gamma \left(
			\ln c_{k,t}^{w}  -\upsilon _{k}^{i}+\beta \mathbb{E}%
			_{t}\left( V_{k,t+1}\right) \right) \right\} }.  \label{eq:migration_rate}
	\end{align}%
	States with higher expected utility attract more workers. Migration responses depend on two key parameters: the cost matrix $\mathbf{\upsilon}$, which governs steady-state migration flows, and $\gamma$, the inverse dispersion of idiosyncratic shocks. A higher $\gamma$ implies a lower shock variance, so migration responds more strongly to consumption differences. \textit{Given $\mathbf{\upsilon}$, we calibrate $\gamma$ to match the empirically observed strength of the migration channel.}

	\subsection{Firms, Production and Trade} 
	\label{sec:model_production}
	Production takes place in a two-stage process. A first set of competitive firms produce goods and sell these to other firms located either domestically or abroad. Second-stage producers, in turn, combine the first-stage goods with imported goods to produce a composite good that is used for consumption and as intermediates. 
	
	\subsubsection{First Stage} First-stage firms use capital (fixed to 1), hire labor, $L_{i,t}= \mathbb{N}^w_{i,t} l_{i,t}$, and buy intermediates, $X_{i,t}$, to produce goods according to
	\begin{align}
		\label{eq:prodInterm} 
		Q_{i,t} = A_i L_{i,t}^{(1-\alpha)(1-\chi)} X_{i,t}^\chi,
	\end{align}
	where $A_i$ is a scaling factor, and $0\leq \alpha \leq 1$ and $0 \leq \chi \leq 1$ determine the curvature of the production function. Producers own the capital stock and pay dividends to households that consist of sales, $P_{i,t}^{Q} Q_{i,t}$, less labor costs, $W_{i,t} L_{i,t}$, and costs for intermediates, $P_{i,t} X_{i,t}$: $P_{i,t} D_{i,t} =  P_{i,t}^{Q} Q_{i,t} - W_{i,t} L_{i,t} - P_{i,t} X_{i,t}$.
	Each period $t$, first-stage producers choose dividends, $D_{i,t}$, labor, $L_{i,t}$, and intermediates, $X_{i,t}$, to maximize the expected discounted sum of their dividends, $\mathbb{E}_t \sum_{s=0}^\infty  \beta^{t+s} D_{i,t+s} $ subject to the definition of dividends and (\ref{eq:prodInterm}).\footnote{Technically, firms apply a stochastic discount factor to discount their expected future dividends. The log-linearized equilibrium conditions would not be affected by this and we therefore omit it for brevity.} 
	Optimal labor demand and demand for intermediates implies
	\begin{align}
		\label{eq:optimalLD}
		W_{i,t} L_{i,t} &= (1-\alpha)(1-\chi) P^Q_{i,t} Q_{i,t} \\
		P_{i,t} X_{i,t} &= \chi P^Q_{i,t} Q_{i,t}
	\end{align}
	
	\subsubsection{Second Stage} 
	A second set of producers combine domestically produced and imported first-stage goods to produce a composite, second-stage good, $Z_{i,t}$. These producers choose inputs to maximize profits, $
	\left\{ P_{i,t} Z_{i,t} - \sum_{j=1}^{N+1} P_{j,t}^{Q} Q^j_{i,t}\right\}$, 
	subject to the CES\ production function 
	\begin{align}
		\label{eq:productionFinal} 
		Z_{i,t} = \left( \sum_{j=1}^{N+1} \left( \omega^j_i \right)^{\frac{1}{\psi}} \left( Q^j_{i,t} \right)^{\frac{\psi-1}{\psi}} \right)^{\frac{\psi}{\psi-1}}.  
	\end{align}%
	Here, the parameter $\psi$ describes the elasticity of substitution between first-stage goods, $\omega^j_i$ describes the preference weight by state $i$ firms for goods imported from $j$ (with $\sum_j \omega^j_i = 1$), and $Q^j_{i,t}$ is the quantity of first-stage goods imported from $j$ by state $i$. Notice that state $i$ also imports from the rest of the world, indexed by $N+1$. The price of the RoW good in U.S. dollars is given by $P^Q_{N+1,t}$. The optimal expenditure share on goods imported from $j$ is given by
	\begin{align}
		\label{eq:optimalDemandIntermediates}
		s^j_{i,t}:=  \frac{P^Q_{j,t} Q^j_{i,t}}{P_{i,t} Z_{i,t}} &= \omega^j_i  \left( \frac{P^Q_{j,t}}{P_{i,t}} \right)^{1-\psi}. 
	\end{align}
	We assume a symmetric structure of composite-good producers in the RoW, giving rise to a demand for international exports for each U.S. state analagous to (\ref{eq:optimalDemandIntermediates}).

	\subsection{Monetary and Fiscal Policy}
	\subsubsection{Monetary Policy}
	\label{sec:government} Monetary policy in the United States is set by the Federal Reserve.
	The Federal Reserve\ follows a \textquotedblleft Taylor rule\textquotedblright\ that
	targets GDP-weighted averages of GDP fluctuations and inflation throughout the
	United States:
	\begin{align}
		i_t=\phi i_{t-1}+(1-\phi )\left[ \bar{r}+\phi_{GDP}\sum_{i=1}^{N}\widebar{GDP}_i \cdot \widetilde{GDP}_{i,t}+\phi _{\pi }\sum_{i=1}^{N}\widebar{GDP}_i \cdot \pi _{i,t}\right] .
		\label{eq:Mankiw_Rule}
	\end{align}%
	The parameters $\phi $, $\phi_{GDP}$ and $\phi _{\pi } $ govern interest rate persistence, the
	interest rate reaction to fluctuations in GDP and the reaction to inflation, respectively.

	\subsubsection{Fiscal Policy} 
	The federal government spends $P^Q_{i} G_{i,t}$ on first-stage goods produced in state $i$. We assume that per-capita spending $g_{i,t}$ follows an AR(1) with persistence $\rho$:
	\begin{align}
		\label{eq:shock}
		\ln g_{i,t} = \rho \ln g_{i,t-1} + \epsilon_{i,t}.
	\end{align}
	The government's budget constraint is 
	\begin{align}
	\sum_i P^Q_{i,t}G_{i,t} = \sum_i \left(P_{i,t} \left(Y^k_{i,t} + Y^w_{i,t} \right) -  \mu_{i,t} \left[ \mathbb{N}^k_i \left(P_{i,t} y^k_{i,t} \right)^{1-\tau} +  \mathbb{N}^w_{i,t} \left(P_{i,t} y^w_{i,t} \right)^{1-\tau} \right] \right).	
	\end{align}
	To ensure a balanced budget at all times, the government adjusts the level of taxation, $\mu_{i,t}$, across all states. We impose that any changes in nominal government spending outside the steady state are financed by adjusting $\mu_{i,t}$ proportionally across states, i.e. $\tilde{\mu}_{i,t} = \tilde{\mu}_{j,t}$.
	
	\subsection{Market Clearing} 
	Market clearing of the first-stage good requires that the total production of goods by state $j$, $Q_{j,t}$,equals total demand, which consists of demand by second-stage producers, $\sum_{i=1}^N Q^j_{i,t}$, demand by the RoW, $Q^j_{N+1,t}$, and demand by the government, $G_{j,t}$. 
	\begin{align}
		\label{eq:mcIntermediates}	
		Q_{j,t}= G_{j,t} + \left(\sum_{i=1}^N Q^j_{i,t} \right) + Q^j_{N+1,t}
	\end{align}%
	The second-stage, composite good is used for consumption or as intermediate:
	\begin{align}
	Z_{i,t} = C_{i,t} + X_{i,t}.	
	\end{align}
	Final consumption equals consumption by capital owners and by workers: 
	\begin{align}
		\label{eq:defC}	
		C_{i,t}=  C_{i,t}^{k} + C_{i,t}^w.
	\end{align}
	Bond market clearing requires 
	\begin{align}
	\sum_{i=1}^N B_{i,t} = 0.	
	\end{align}
	Under the assumption of no risk sharing between the United States and the RoW, the total value of RoW imports from all U.S. states needs to equal the total value of RoW exports towards U.S. states:
	\begin{align}
	\sum_{j=1}^N P^Q_{j,t} Q^{j}_{N+1,t} = P^Q_{N+1,t} \sum_{i=1}^N  Q^{N+1}_{i,t}.	
	\end{align}

	\section{Model Solution and Estimation} \label{sec:model_solution} 
	
	We solve the model using a first-order approximation around a zero inflation steady state. We calibrate the model to U.S. states. The model is expressed at a quarterly frequency. 
	
	We partition the parameters into a set of calibrated parameters and a set of estimated parameters. Parameters that have commonly accepted values used in the international business cycle literature or parameters that have direct analogues in the data (e.g., trade shares, migration shares, etc.) are calibrated accordingly. Taking the calibrated parameters as given, we estimate the remaining four parameters pertaining to the risk-sharing channels---the migration sensitivity, $\gamma$, the home bias in capital markets, $\kappa$, the parameter of tax progressivity, $\tau$, and the adjustment costs on bond holdings, $\iota$---as well as the Armington elasticity $\psi$. 
	
	\subsection{Calibration} 
	Table \ref{tab:calibration} lists the calibrated parameter values for our baseline specification. While some parameters are assumed to be the same across states, our model captures states' variation in size, their exposure to trade and migration flows, and their exposure to military spending shocks.  
	
	\paragraph{Households, Population and Migration} From the Euler align, we obtain a relation between the discount factor $\beta$, the steady-state interest rate, the tax progressivity parameter $\tau$, and the share of aggregate military spending in GDP, $\bar{\mathpzc{g}}^{US}:=\bar{g}^{US}/\widebar{gdp}^{US}$: $\beta = \left( \bar{i} (1-\tau) (1-\mathpzc{g}^{US})+1\right)^{-1}$. We set the annual interest rate to 4 percent and solve for $\beta$ given values of $\bar{\mathpzc{g}}^{US}$ and $\tau$. The share of workers in the total population is $1-\alpha$, implying equal steady-state consumption for workers and capital owners. Migration costs $\upsilon_j^i$ affect the system only through steady-state bilateral migration shares $\bar{n}^i_j$. Rather than solving for the cost parameters directly, we condition on the observed bilateral migration matrix constructed from IRS data \citep{house2025quantifying}. Finally, we set the wage rigidity parameter to $\theta_w=0.84$, in line with the estimates of \citet{grigsby2021aggregate}.
	
	\paragraph{Firms, Production and Trade} We set $\alpha = 0.38$ to match a labor income share of 0.62 \citep{karabarbounis2024perspectives}. The parameter $\chi$ governs the ratio of gross output to value added; we calibrate it to $\chi = 0.438$, which corresponds to an output-to-value-added ratio of 1.78 \citep{bea2025}. States’ productivity, $A_i$, enters the system of log-linearized aligns through its effect on steady-state GDP. We directly condition on observed GDP levels from the BEA. Likewise, we condition on observed bilateral expenditure shares on imported goods, $\bar{s}^i_j$, rather than solving for preference weights $\omega^i_j$. For this, we use the dataset of \citet{rodriguez2025trade}, which provides estimates of trade in both goods and services across U.S. states and with the rest of the world. 
	Including services is important because they constitute a large share of output and are less tradable than goods. The resulting dataset delivers a bilateral trade matrix covering both interstate and international linkages. On average, trade accounts for 22 percent of state output, with one-third of this involving international partners. This trade intensity is comparable to medium-sized European economies such as Czechia or Denmark.\footnote{In 2002, the import share of gross output was 21 percent for Czechia and 24 percent for Denmark (WIOT). The average import share is below the 31 percent reported by \citet{nakamura2014fiscal}, who measure trade relative to GDP rather than gross output.} 
	
	\paragraph{Fiscal and Monetary Policy} \noindent We set the steady-state ratio of federal government purchases to GDP to its observed value at the national level. The monetary policy rule (\ref{eq:Mankiw_Rule}) is parameterized with $\phi=0.75$, $\phi_Y=0.5$, and $\phi_\pi=1.5$, consistent with \citet{clarida1999science}. The share of federal military spending in state GDP is taken from \citet{nakamura2014fiscal}, who also estimate a quarterly persistence parameter for military spending shocks of $\rho=0.93$.
	
	\begin{table}     
		\centering         
		\resizebox{1\textwidth}{!}{
			\begin{threeparttable}
				\caption[Calibration]{\textsc{Calibration}}
				\label{tab:calibration}
				\begin{tabular}{l l d{1.3} l}
					\toprule  
					Parameter &    & \multicolumn{1}{c}{\text{Value}} & Source / Target\\ \midrule 	
					
					\multicolumn{3}{l}{\textbf{Households, Population and Migration}} \\	
					Steady-state interest rate & $\bar{i}$ & 0.01 & Annual interest rate of 4 percent \\ 
					Population & $\bar{\mathbb{N}}_i$ & \multicolumn{1}{c}{\text{st.sp.}}  & Tax returns \citep{house2025quantifying}   \\
					Migration costs & $\upsilon^i_j$ & \multicolumn{1}{c}{\text{st.sp.}}    & Bilateral migration shares  \citep{house2025quantifying} \\
					Wage stickiness & $\theta_w$ & 0.84 & Wage duration of 1.5 years \citep{grigsby2021aggregate} \\ [.3cm]
					
					\textbf{Firms, Production and Trade} \\
					Labor share in value added & $1-\alpha$ & 0.62 & Labor income share \citep{karabarbounis2024perspectives} \\
					Share intermediates in output & $\chi$ & 0.44 & Output to value added `97-`19, \citep{bea2025}\\
					State size & $A_i$ & \multicolumn{1}{c}{\text{st.sp.}}    & State GDP `66-`19, \citep{bea2025b} \\
					Preference weights on goods & $\omega^i_j$ & \multicolumn{1}{c}{\text{st.sp.}}  & Expenditure shares by goods' origin, $\bar{s}^j_i$ \\
					& & &  \citep{rodriguez2025trade}	\\ 			
					
					\textbf{Monetary and Fiscal Policy} \\
					MP rule persistence & $\phi$ & 0.75 & \cite{clarida2000monetary} \\
					MP rule GDP coefficient & $\phi_{GDP}$ & 0.5 & \cite{clarida2000monetary} \\
					MP rule inflation coefficient & $\phi_\pi$ & 1.5 & \cite{clarida2000monetary} \\
					Military spending over GDP & $\bar{\mathpzc{g}}_i$ & \multicolumn{1}{c}{\text{st.sp.}}   & \cite{nakamura2014fiscal} \\ [.3cm]
					
					\textbf{Shock process} \\
					Shock persistence & $\rho$ & 0.93 & \cite{nakamura2014fiscal} \\ 
					\bottomrule  
				\end{tabular}
				\begin{tablenotes}[flushleft]                               
					\item {\footnotesize \textit{Notes:} Values marked with st.sp. are state specific or state-pair specific. }
				\end{tablenotes}                                                                                                                                                                           
			\end{threeparttable}
		}
	\end{table}

	\subsection{Estimation}
	Given the set of calibrated parameters, we employ a simulated method of moments to estimate the risk-sharing parameters—the migration elasticity ($\gamma$), the equity home bias ($\kappa$), the tax progressivity ($\tau$), the adjustment cost on bond holdings ($\iota$)—and the Armington elasticity ($\psi$). Our procedure proceeds in three steps. First, given an initial guess for $[\gamma, \kappa, \tau, \iota, \psi]$, we feed the observed changes in government spending for each state into the model. To this end, we quarterly interpolate the annual government spending data and log-linearly detrend it (See Appendix D.5.2). The model then generates time series of all endogenous variables at the state level, expressed in log deviations from the non-stochastic steady state. Second, we re-run regressions along the lines of (\ref{eq:main_regression}) on the simulated data. Since our model is only accurate up to a first order, we follow \cite{nakamura2014fiscal} and approximate the estimation align (\ref{eq:main_regression}) up to first order.\footnote{That is, the regressand is $\widetilde{GDP}_{i,t} - \widetilde{GDP}_{i,t-1}$ and the regressor is $\frac{\bar{P}^Q_i \bar{G}_i}{\widebar{GDP}_i} \left(\tilde{P}^Q_{i,t} + \tilde{g}_{i,t} - \tilde{P}^Q_{i,t-1} - \tilde{g}_{i,t-1}\right)$.} As in the empirical section, we replace the outcome variable by the various risk-sharing channels. Third, we compare our estimates of the risk-sharing betas of the four channels: $\beta^M$, $\beta^K$, $\beta^F$ and $\beta^B$, as well as the output multiplier, $m^Y$, to those found in the empirical data (see Table \ref{tab:main_results}), and adjust our initial guess until the two sets of parameter values match. 
	
	The four risk-sharing parameters closely map to the risk-sharing $\beta$'s. The migration elasticity $\gamma$ governs the responsiveness of migration to changes in labor income, while the equity home bias $\kappa$, the tax progressivity $\tau$, and the bond adjustment cost $\iota$ shape the capital, fiscal, and bond market channels, respectively. The Armington elasticity $\psi$ is closely linked to the output multiplier: a higher $\psi$ strengthens expenditure switching in response to an increase in military spending, raising imports, weakening local general equilibrium effects, and thereby lowering the multiplier.
	
	To implement this procedure, we define the model counterparts of the income concepts in the data: \emph{Aggregate nominal GDP} is 
	\begin{align}
	GDP_{i,t} = P_{i,t}^{Q}  Q_{i,t} - P_{i,t} X_{i,t}.	
	\end{align}
	\emph{Per-capita nominal GDP} is 
	\begin{align}
	gdp_{i,t} =GDP_{i,t} / \mathbb{N}_{i,t}.	
	\end{align}
	\emph{Per-capita personal income} consists of labor income of workers plus dividends received by capital owners as well as net interest:
	\begin{align}
	pi_{i,t} = \frac{1}{\mathbb{N}_{i,t}} \left[ W_{i,t} L_{i,t} + \kappa P_{i,t} D_{i,t} + \kappa^{US} D^{US}_t + B_{i,t-1} i_{t-1} \right] .	
	\end{align}
	It differs from GDP by including dividends and interest income earned abroad. Hence, the capital market channel reflects movements in differences in capital market returns. 
	\emph{Per-capita disposable income} is
	\begin{align}
	di_{i,t} = \mu_t \left( pi_{i,t} \right)^{1-\tau}	
	\end{align}
	And \emph{per-capita nominal consumption} is given by 
	\begin{align}
	pce_{i,t} = P_{i,t} c_{i,t}.	
	\end{align}

	\paragraph{Estimated Parameters} 
	Table \ref{tab:estimation} displays the estimated parameters together with the targeted moments. Since we are exactly identified, our model perfectly matches the empirical moments. Our estimate of the Armington elasticity is $\psi = 1.17$, which is a bit higher than what is typically estimated from international data \citep{boehm2023long}, but in line with the notion that trade across U.S. states encounters lower barriers and might therefore be more responsive to relative price changes. The estimated migration elasticity is $\gamma=5.78$, which is substantially larger than what is estimated by \cite{house2025quantifying} for a set of European countries, in line with the idea that workers are more mobile across U.S. states than across European countries. The equity home bias is estimated to be $\kappa=0.69$, implying that a \$1 increase in capital income generated in state $i$ raises received capital income in that state by \$0.69. This number is in line with the shares of the different types of capital income. Over our sample period, about 40 percent of capital income is likely to be generated locally (proprietors' income including taxes, rental income), whereas the remaining 60 percent are easier to diversify (corporate profits including taxes and net interest). Our estimate of $\kappa$ suggests that about half of the diversifiable capital income is actually diversified.\footnote{Over our sample period, about a quarter of capital income stems from proprietors' income, generally income generated by self-employed (e.g. restaurant owners) that, in practice, is linked to the household's home state. Another 5 percent stems from rental income, which might also be biased towards a household's home state. About 45 percent of capital income consists of corporate profits and interest rate income, where geographical diversification is easier. And another quarter relates to taxes on production and imports less subsidies. Distributing those taxes proportionately to proprietors' income and corporate profits suggests that about 40 percent of capital income is accounted for by proprietors' income including taxes and rental income.} Our estimate of tax progressivity is $\tau=0.38$, which translates into a marginal tax rate of about 40 percent, in line with the evidence from \cite{altig2020marginal} and somewhat above the 34 percent estimate found by \cite{heathcote2017optimal}. Bond adjustment costs for capital owners are estimated to be somewhat large to match the small credit market channel. The estimate of $\iota=0.05$ implies that a bond-to-GDP position that deviates 10 percentage points from steady state entails costs of about 0.2 basis points of GDP.  

	\begin{table}     
		\centering             
		\begin{threeparttable}
			\caption[Estimation]{\textsc{Estimation}}
			\label{tab:estimation}
			\begin{tabular}{l d{2.3} d{2.3} c l d{2.3}}
				\toprule  
				\multicolumn{1}{l}{\text{Moment}} & \multicolumn{1}{c}{\text{Data}} & \multicolumn{1}{c}{\text{Model}}  && \multicolumn{1}{l}{\text{Parameter}} & \multicolumn{1}{l}{\text{Value}} \\ \midrule 
				Output multiplier ($m^Y$) & 1.31 & 1.32 && Trade elasticity ($\psi$) & 1.17 \\
				Migration channel ($\beta^M$) & 0.07 & 0.07 && Migration elasticity ($\gamma$) & 5.78 \\
				Capital market channel ($\beta^K$) & 0.12 & 0.12 && Equity home bias ($\kappa$) & 0.69 \\
				Fiscal transfer channel ($\beta^F$) & 0.31 & 0.31 && Tax progressivity ($\tau$) & 0.38 \\
				Credit market channel ($\beta^C$) & 0.02 & 0.02 &&  Bond adj. cost ($\iota$) & 0.05 \\
				\bottomrule 
			\end{tabular}  
			\begin{tablenotes}[flushleft]                               
				\item {\footnotesize \textit{Notes:} The table displays the targeted moments and estimated parameters. The targeted moments correspond to those reported in Table~\ref{tab:main_results}. Column 2 reports the values of those moments as estimated from the actual data. Column 3 reports the values as estimated from the simulated data.}
			\end{tablenotes}   		     
		\end{threeparttable}
	\end{table}

	\section{Quantitative Results} \label{sec:counterfactuals} 
	This section uses our estimated structural model to quantify the benefits of risk sharing. We conduct a series of counterfactual experiments to assess how interstate risk-sharing mechanisms attenuate regional business cycles in the United States. In each experiment, we feed the same sequence of federal military spending shocks into the model and examine how the output and consumption multipliers change.

	\subsection{Multipliers Without Any Risk Sharing}
	Our main counterfactual shuts down all risk-sharing channels by setting the migration elasticity to zero, $\gamma \rightarrow 0$, imposing complete equity home bias $\kappa = 1$, implementing a linear tax system, $\tau=0$, and making the adjustment costs on bonds go towards infinity, $\iota \rightarrow \infty$. We then feed the observed sequence of military spending shocks into this counterfactual economy and re-estimate the output and consumption multipliers from the simulated data.
	
	Figure \ref{fig:main_counterfactual} displays the main result of this exercise. In the baseline calibration, the output multiplier is 1.32, as observed in the data. In the counterfactual economy without any risk sharing, the output multiplier rises to 2.55, a 93 percent increase. This demonstrates that the risk-sharing mechanisms currently operating across U.S. states substantially attenuate regional business cycle fluctuations.

	\begin{figure}[h!]
		\centering
		\includegraphics[width=.75\linewidth]{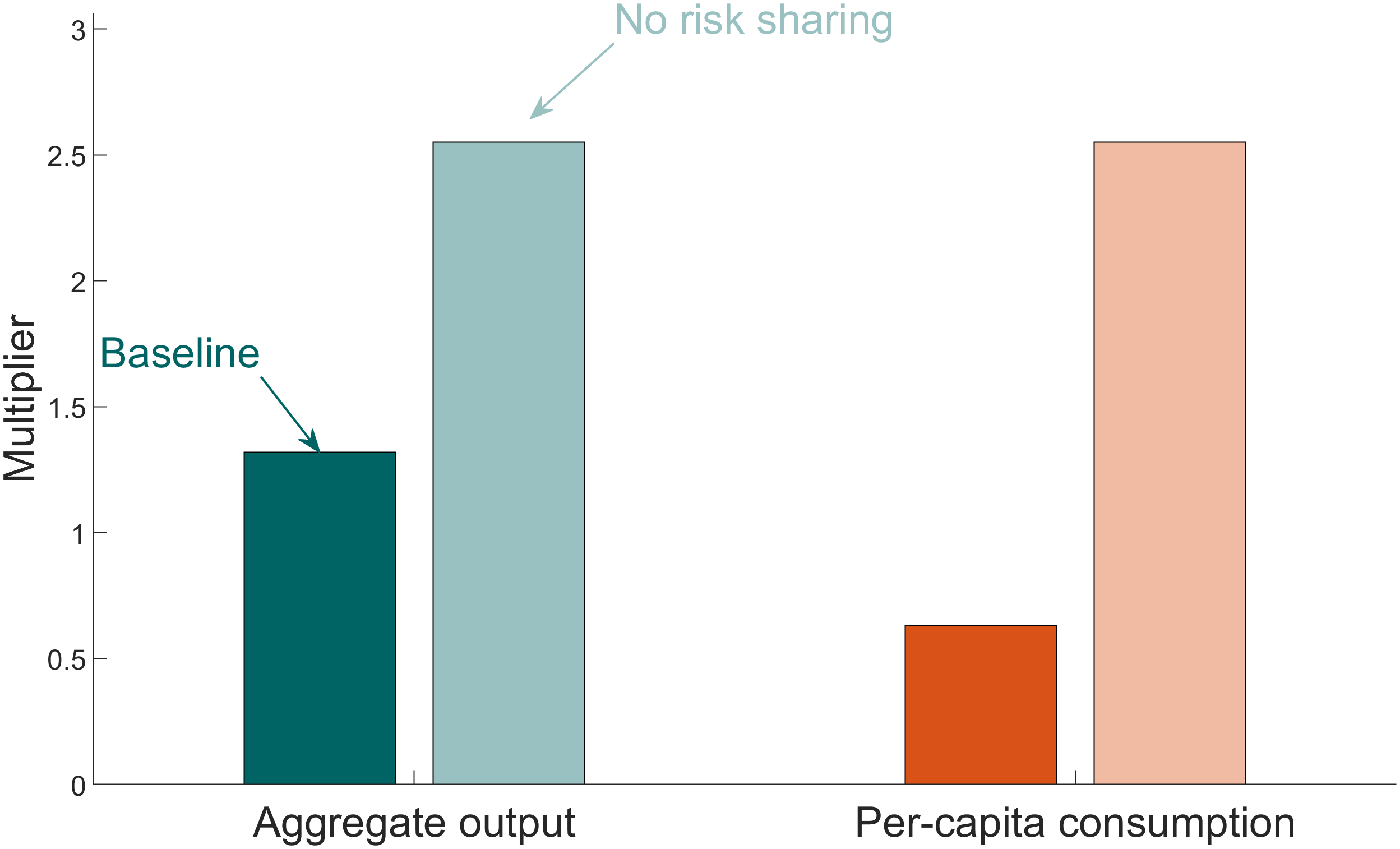}
		\caption{\textsc{Multipliers with and without risk sharing}}
		\label{fig:main_counterfactual}	
		\begin{threeparttable}
			\begin{tablenotes}[]                                                                   
				\item {\footnotesize \textit{Notes:} The figure displays multipliers for aggregate nominal GDP and per-capita nominal consumption. The estimates are based on model data, simulated from the benchmark calibration and a counterfactual calibration that shuts down all risk-sharing channels ($\gamma \rightarrow 0$, $\kappa=1$, $\tau=0$ and $\iota \rightarrow \infty$).}
			\end{tablenotes}   
		\end{threeparttable}
	\end{figure}
	
	This amplification of output volatility translates into an even larger increase in consumption volatility. In the baseline, the consumption multiplier is 0.64, consistent with the empirical finding that 47 percent of the income shock passes through to consumption. If we were to only consider the direct effect of eliminating risk sharing---that is, raising the pass-through from 47 percent to 100 percent while holding the output multiplier fixed at its baseline value of 1.32---the consumption multiplier would rise to 1.32.
	
	However, the full impact on consumption volatility is far greater because the absence of risk sharing also amplifies the output multiplier itself. When this indirect general equilibrium effect is taken into account, the consumption multiplier rises from 0.64 in the baseline to 2.55 in the counterfactual. 
	Based on align (\ref{eq:response_C}), we can calculate the share of the indirect effect as the change in the output multiplier (as we're moving from the baseline to the no-risk-sharing counterfactual) relative to the change in the consumption multiplier:\footnote{Based on align (\ref{eq:response_C}), the change in the log consumption multiplier can be decomposed as:
		$$\Delta \ln(m^c) = \Delta \ln(\beta^C) + \Delta \ln(m^Y),$$
		where the first term reflects the direct effect of risk sharing (the change in the pass-through) and the second term reflects the indirect effect (the change in the output multiplier).}
	\begin{align}
	\mathcal{S}^{indirect}_{i,t} \approx \frac{\Delta \ln(m^Y)}{\Delta \ln(m^c)}			
	\end{align}
	Based on the results from the quantitative model, the indirect channel accounts for 47 percent of the total reduction in consumption volatility provided by risk sharing.
	
	\paragraph{Role of the Armington Elasticity} The Armington elasticity governs to what extent households and firms switch towards imports when an increase in military spending raises domestic prices, which shapes the size of the domestic feedback loop and the output multiplier. In our baseline, we estimate the Armington elasticity $\psi$ to match the size of the output multiplier. However, the literature has found a range of different estimates and we therefore check our quantitative predictions based on a wider range of values for $\psi$. Table \ref{tab:armington} considers a low value of $\psi = 0.76$ \citep[as estimated by][]{boehm2023long} and a high value of $\psi = 2$ \citep[as used by][]{nakamura2014fiscal}.\footnote{\cite{boehm2023long} estimate an elasticity of tariff-exclusive trade flows to tariff changes of -0.76 in the year following the tariff change, and potentially smaller (in absolute value) upon impact. They show that the negative of this elasticity maps into the elasticity of substitution across varieties in a Krugman trade model with monopolistic competition (see p.892). In our model with Armington trade, this corresponds to the elasticity of substitution across goods of different origin, $\psi$.} All other parameters are kept the same. 
	
	A low value of $\psi$ amplifies the local general equilibrium effects, especially in the absence of any risk sharing. Removing risk sharing makes the output multiplier go up by 140 percent (from 1.57 to 3.73) rather than 93 percent as we observed in the baseline case. Consequently, the increase in the consumption multiplier is also larger with indirect effects accounting for 54 percent of the total change. Conversely, a higher trade elasticity mutes the local feedback loop and removing risk sharing would raise the output multiplier by only 61 percent (from 1.01 to 1.62). The consumption multiplier would triple, with indirect effects accounting for 38 percent. 
	
	\begin{table}     
		\centering             
		\begin{threeparttable}
			\caption{\textsc{Effect of Armington elasticity on multipliers}}
			\label{tab:armington}
			\begin{tabular}{l d{1.2} d{1.2} d{1.2}}
				\toprule  
				\multicolumn{1}{l}{Metric} 
				& \multicolumn{1}{c}{$\psi=0.76$} 
				& \multicolumn{1}{c}{$\psi=1.17$}  
				& \multicolumn{1}{c}{$\psi=2.00$}  \\ \midrule 
	\textbf{Aggregate output multiplier} \\
	Baseline & 1.57 & 1.32 & 1.01 \\
	No risk sharing & 3.73 & 2.55 & 1.62 \\ [.3cm]
	\textbf{Per-capita consumption multiplier} \\
	Baseline & 0.76 & 0.63 & 0.48 \\
	No risk sharing & 3.73 & 2.55 & 1.62 \\ [.3cm]
	Share indirect effect & 0.54 & 0.47 & 0.38 \\
				\bottomrule
			\end{tabular}
			\begin{tablenotes}[flushleft]
				\item {\footnotesize \textit{Notes:} The table reports the multipliers for aggregate nominal GDP and per-capita nominal consumption for different values of the Armington elasticity. The baseline value is $\psi=1.17$. The share of the indirect effect corresponds to $\Delta \ln (m^Y)$ / $\Delta \ln (m^c)$, where $\Delta \ln (m^Y)$ corresponds to the log of the output multiplier without risk sharing less the log of the output multiplier with the observed level of risk sharing (baseline).}
		\end{tablenotes}
	\end{threeparttable}
\end{table}

\subsection{The Role of Individual Risk-Sharing Channels}
Having established the aggregate benefits of risk sharing, we now isolate the contribution of each channel. We do this by shutting down one channel at a time while keeping the others active at their baseline calibration. This exercise reveals which mechanisms are most critical for stabilization and highlights the interdependencies between the channels.

Figure \ref{fig:counterfactuals} presents the results. Each subplot corresponds to a counterfactual where one channel is eliminated. The top bars show the resulting changes in output and consumption multipliers, while the bottom bars illustrate how the remaining channels' coefficients adjust.
\begin{figure}[h!]
	\centering
	\includegraphics[width=1\linewidth]{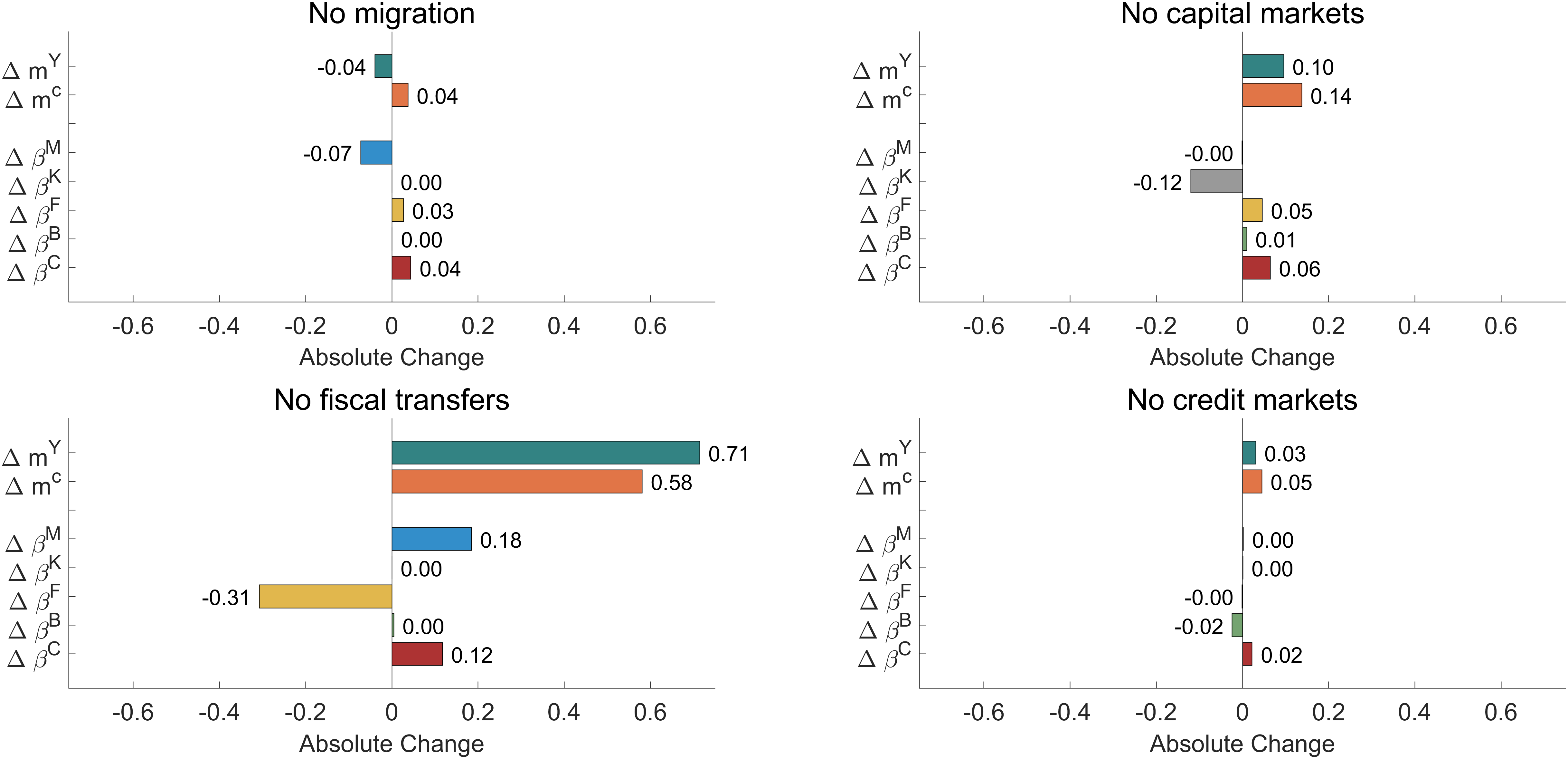}
	\caption{\textsc{Risk-sharing $\beta$'s for counterfactuals}}
	\label{fig:counterfactuals}	
	\begin{threeparttable}
		\begin{tablenotes}[]                                                                   
			\item {\footnotesize{\textit{Notes:} Risk-sharing $\beta$'s and multipliers are reported for alternative model calibrations. The notes to Figure \ref{fig:main_counterfactual} explain how the various $\beta$'s and the multipliers are estimated from the simulated data. In the counterfactuals, eliminating migration is achieved through $\gamma \rightarrow 0$, setting $\kappa=1$ eliminates cross-border capital markets, setting $\tau=0$ eliminates fiscal transfers, and setting $\iota \rightarrow \infty$ eliminates credit markets.}}
		\end{tablenotes}   
	\end{threeparttable}
\end{figure}

The figure reveals that fiscal transfers are the most important risk-sharing mechanism. Without the tax and the transfer system, output multipliers would increase by 0.71 and consumption multipliers by 0.58. Income diversification through capital markets also play some role, but its effect is more limited with output multipliers increasing by only 0.10 if the U.S. had no income diversification through capital markets. 

The bottom bars illustrate how some of the other risk-sharing channels pick up the slack when one channel is shut down. For instance, without the tax and transfer system, the fall in consumption during a local recession would be more pronounced, creating stronger incentives for workers to relocate to more prosperous regions. Quantitatively, these substitution effects are strong. In the baseline, 31 percent of local GDP fluctuations are absorbed by the tax and transfer system. Setting that to zero raises the share of unsmoothed fluctuations not by 31 percentage points, but only by 12 percentage points because the share absorbed by migration would increase from 7 to 25 percent. Similarly, shutting down income diversification through capital markets does not raise the share of unsmoothed fluctuations by 12 percentage points, but only by 6 percentage points because the strong fiscal transfer system would absorb almost half of these additional fluctuations. 

The substitutability across risk-sharing channels can also be seen from adding up the effects of the multiplier changes across the four subplots. If they operated independently, the sum of changes in individual consumption multipliers when shutting down each channel separately would equal the increase from shutting down all channels at once. Instead, we find that the sum of the four ``single-channel-off'' counterfactuals is 0.81 ($=0.04+0.14+0.58+0.05$, i.e. adding up the second-row bars across the four subplots), which is less than half of the total increase in consumption volatility of 1.91 ($=2.55-0.64$) when all channels are eliminated (see Figure \ref{fig:main_counterfactual}). This gap shows that when one risk-sharing mechanism is compromised, the remaining channels partially compensate.

The counterfactuals above quantify the marginal contribution of each channel around the observed equilibrium. However, because channels interact, these marginal effects are not constant. To obtain each channel's average marginal contribution across all possible combinations of active channels, we implement a Shapley decomposition that averages the effect of adding the channel over every subset of other channels.

\begin{table}     
	\centering             
	\begin{threeparttable}
		\caption[Shapley decomposition]{\textsc{Marginal effects of each risk-sharing channel}}
		\label{tab:shapley}
		\begin{tabular}{d{3.0} d{3.0} d{3.0} d{3.0} c d{5.0}}
			\toprule  
			\multicolumn{1}{c}{Migration} 
			& \multicolumn{1}{c}{Capital markets} 
			& \multicolumn{1}{c}{Fiscal transfers}  
			& \multicolumn{1}{c}{Credit markets} 
			&& \multicolumn{1}{c}{Sum}  \\ \midrule 
			16\% & 21\% & 55\% & 9\% && 100\%  \\ \bottomrule
		\end{tabular}
		\begin{tablenotes}[flushleft]
			\item {\footnotesize \textit{Notes:} This table reports the Shapley value for each risk-sharing channel's contribution to consumption smoothing. Each value is the average marginal contribution of a channel to reducing the consumption multiplier, computed across all possible coalitions of active channels. Values are expressed as shares of the total reduction in the consumption multiplier when moving from no risk sharing to the observed level of risk sharing.}
			
		\end{tablenotes}
	\end{threeparttable}
\end{table}

Table \ref{tab:shapley} presents these results, expressed as shares of the total reduction in consumption volatility. The main results from Figure \ref{fig:counterfactuals} survive: Fiscal transfers emerge as the most important mechanism, accounting for 55 percent of the total reduction in the consumption multiplier. Income diversification through capital markets contribute 21 percent, while migration and credit markets account for 16 and 9 percent, respectively.  

\subsection{Why Some States Benefit More from Risk Sharing Than Others}
Our final analysis explores why the benefits of risk sharing vary across states. For each state, we compute the output multiplier as the one-year response of GDP to a targeted increase in federal government spending equivalent to 1 percent of state GDP. We do this in both our baseline model and in the counterfactual with no risk sharing.

Figure \ref{fig:counterfactual_byState} displays the results. In the baseline scenario, multipliers are relatively homogeneous across states (ranging from 0.95 to 1.5). Without risk sharing, however, they become larger and more dispersed (ranging from 1.3 to 3.9). While the ranking of states is mostly unaffected, the model suggests that some states benefit substantially more from risk sharing than others. Panel (b) reveals a strong negative relationship between a state's import share and the ``benefit'' of risk sharing, measured as the increase in the output multiplier when risk sharing is shut down.

States with low import shares (i.e., relatively closed economies) like Hawaii and Colorado experience stronger feedback loops between consumption and income. In these states, risk-sharing mechanisms that weaken this feedback loop are particularly effective at reducing the multiplier. Conversely, for very open states like Kentucky or Vermont, a larger fraction of any demand shock naturally leaks out through imports, dampening the output multiplier even without formal risk-sharing mechanisms.

This finding highlights an interesting interaction between risk-sharing and trade openness. Trade openness was proposed by \cite{mckinnon1963optimum} as another precondition for an optimal currency area. We show that these two notions that have been discussed separately in the literature are partial substitutes.\footnote{An exception is \cite{farhi2014labor} that point out that labor mobility has a more stabilizing role in economies that are more open to trade.}

\begin{figure}[htbp]
	\centering
	\begin{subfigure}{0.48\linewidth}
		\includegraphics[width=\linewidth]{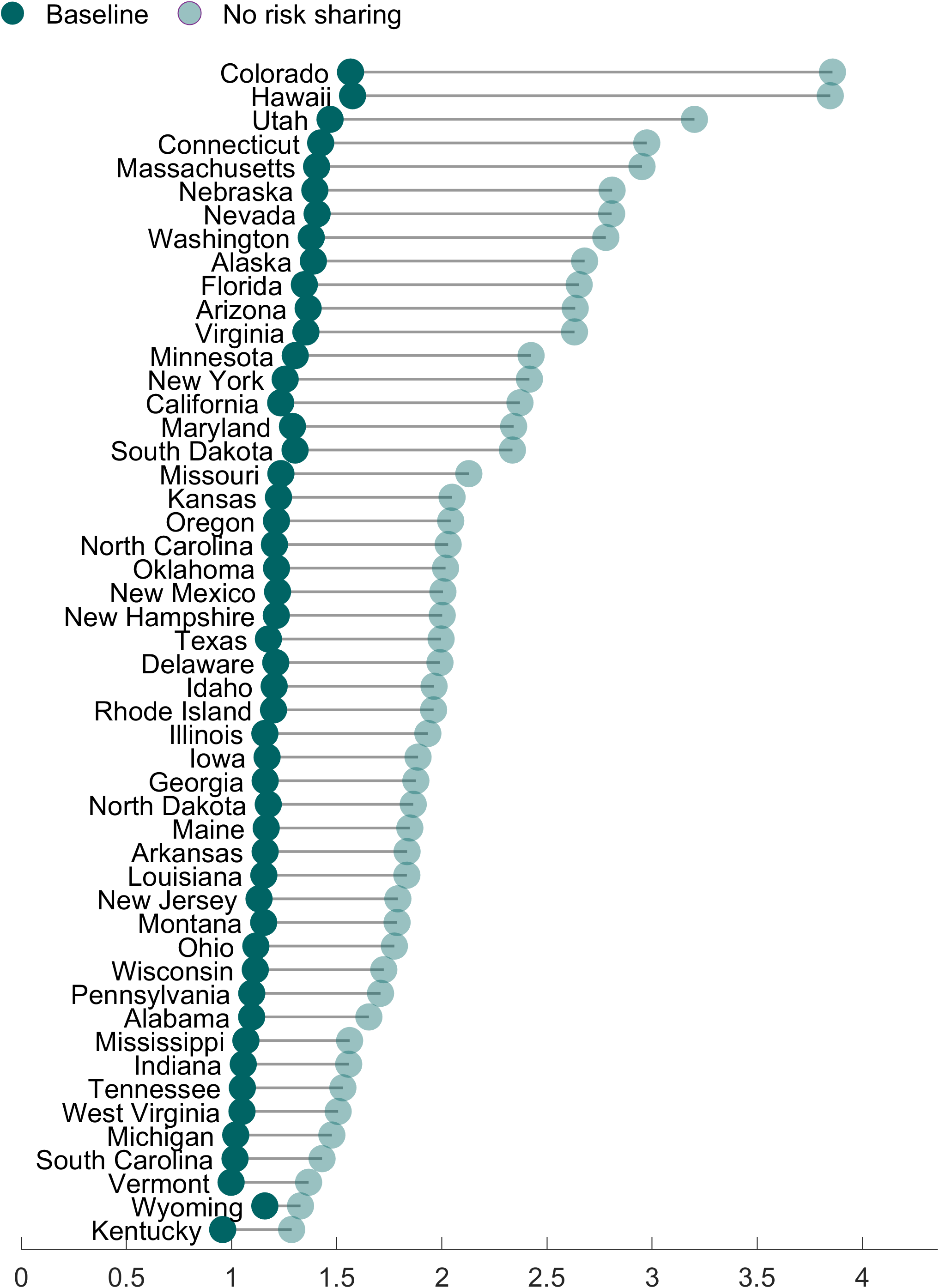}
		\caption{Output multipliers by state}
		\label{fig:counterfactual_byState}
	\end{subfigure}
	\hfill
	\begin{subfigure}{0.48\linewidth}
		\includegraphics[width=\linewidth]{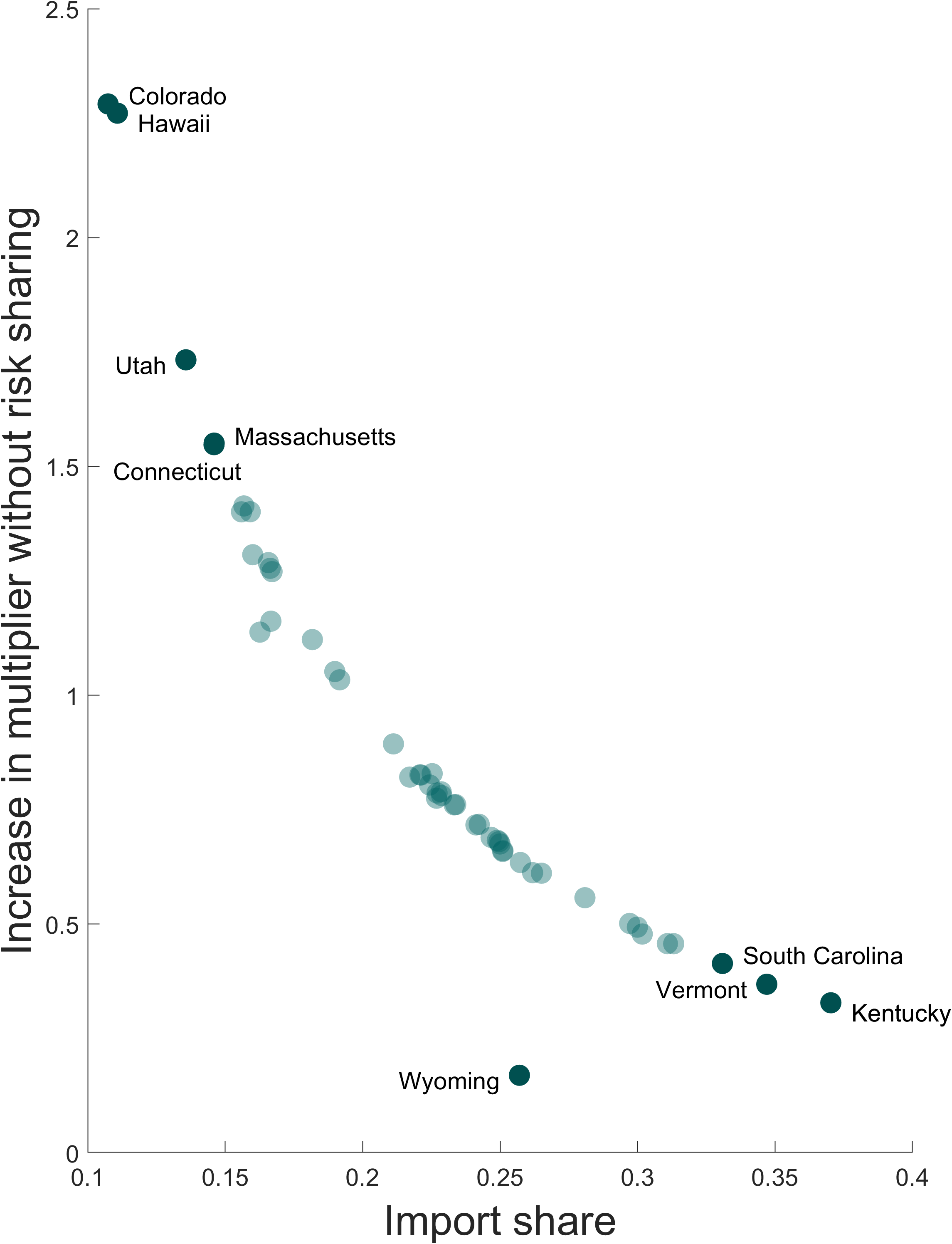}
		\caption{Openness and multipliers}
		\label{fig:openness_scatter}
	\end{subfigure}
	
	\caption{\textsc{Output multipliers and openness by state}}
	\begin{threeparttable}
		\begin{tablenotes}
			\item {\footnotesize \textit{Notes:} 
				The left panel displays output multipliers for each U.S. state in the baseline model and a model without any risk sharing. Multipliers are estimated from the model as the 1-year response of GDP to a one-time increase in the innovation to government spending that raises spending by a total of 1 percent of GDP over the first year in a particular state. 
				The right panel displays the change in the output multiplier from the baseline scenario to the no-risk-sharing scenario as a function of a state's home bias in trade (one minus the import share).}
		\end{tablenotes}
	\end{threeparttable}
\end{figure}

\section{Conclusion}
This paper re-evaluates the benefits of risk sharing in currency unions, showing that traditional risk-sharing measures underestimate these benefits by ignoring crucial general equilibrium effects on output volatility. Our central insight is that risk sharing not only directly smooths consumption for a given income shock but also indirectly dampens the income shock itself by stabilizing aggregate demand. These indirect benefits are quantitatively large. A multi-region DSGE model, disciplined by our empirical estimates, shows that the observed level of risk sharing across U.S. states reduces state-level consumption volatility by a factor of almost four, with close to half of this reduction attributable to these indirect, general equilibrium effects.

Although a full quantitative welfare analysis is outside the scope of this paper, our findings have important implications for understanding the welfare benefits of risk sharing.  First, by documenting how indirect effects substantially amplify the consumption-smoothing benefits of risk sharing, our results imply that the welfare gains from these mechanisms are considerably larger than previously thought. Second, our finding that risk sharing reduces output volatility is critical, as the literature has established that output volatility itself can generate first-order welfare costs. For instance, work by \cite{schmitt2016downward} and \cite{dupraz2025plucking} highlight how output volatility directly lowers average output through mechanisms such as downward nominal wage rigidity. Accounting for these first-order effects would further magnify the welfare gains from risk sharing beyond what consumption smoothing alone would imply.

We see an important contribution of our paper to analyze the benefits of risk sharing through the lens of a structural model, which allows us to conduct proper counterfactuals. However, there are some limitations to our exercise: For instance, we do not account for potential endogenous responses in production specialization. As \cite{kalemli2003risk} document, greater geographical diversification of income sources through capital markets can enable regions to specialize more intensively, since a higher variance in locally generated income need not translate into a higher variance in received income.\footnote{This is consistent with our finding that capital income is 2-3 times more volatile than labor income at the state level.} Our counterfactuals focus solely on the demand-side feedback effects of risk sharing on income volatility, abstracting from these potential supply-side adjustments.

While our analysis centers on U.S. states, our findings have important implications for other currency unions, particularly the euro area. The conventional wisdom, reflected in numerous policy discussions, is that the primary deficiency in euro area risk sharing is the lack of capital market integration compared to the United States \citep{nikolov2016cross,cimadomo2023changing}. Our results challenge this view. Both our causally identified empirical estimates and our structural model suggest that capital market integration may play a less important role in attenuating regional business cycles than previously thought. Instead, fiscal transfers emerge as a more potent stabilization mechanism. A fruitful exercise for future research would be to apply our integrated empirical and structural framework to the euro area to better gauge its progress toward becoming an ``optimal currency area'' and to identify the most effective policy interventions for enhancing macroeconomic stability.


\newpage 
\bibliographystyle{aea}
\bibliography{literature}



\end{document}